\newcommand{\UPDP}{\textsc{UpPlan-VDPP}\xspace} 
\newcommand{\Oof}{\ensuremath{\mathcal{O}}}
\newcommand*{\R}{\ensuremath{\mathbb{R}}}
\newcommand*{\N}{\ensuremath{\mathbb{N}}}
\newcommand{\PPP}{\mathcal{P}}
\newcommand{\CCC}{\mathcal{C}}
\newcommand{\Right}[1]{\textit{right}(#1)}
\newcommand{\Left}[1]{\textit{left}(#1)}
\newcommand{\npprob}[4]{
\begin{center}\normalfont\fbox{%
\begin{tabular}[t]{rp{#1}}%
\multicolumn{2}{l}{\textsc{#2}}\\%
\textit{Input:} & #3.\\%
\textit{Problem:} & #4%
\end{tabular}}%
\end{center}
}
\spnewtheorem{fact}{Fact}{\bfseries}{\itshape}
\renewenvironment{proof}[1][]{\noindent{%
\ifthenelse{\equal{#1}{}}{{\sl Proof.\ }}{{\sl #1.\ }}%
}}{\hspace{1em}\nobreak\hfill$\Box$\endtrivlist\addvspace{2ex plus
0.5ex minus0.1ex}}
\author{Saeed Akhoondian Amiri \inst{1} \and Ali
 Golshani\inst{2} \\ \and Stephan Kreutzer\inst{1} \and Sebastian
  Siebertz \inst{1}}
\institute{\small Technical University
  Berlin \email{saeed.akhoondianamiri,stephan.kreutzer,sebastian.siebertz@tu-berlin.de}
  \and University of Tehran, \email{ali.golshani@ut.ac.ir}}
\begin{document}

\title{Vertex Disjoint Paths in Upward Planar Graphs}

\maketitle

\begin{abstract}
  The $k$-vertex disjoint paths problem is one of the most studied
  problems in algorithmic graph theory. In 1994, Schrijver proved that
  the problem can be solved in polynomial time for every fixed $k$
  when restricted to the class of planar digraphs and it was a long
  standing open question whether it is fixed-parameter tractable (with
  respect to parameter $k$) on this restricted class. Only recently,
  \cite{CMPP}.\ achieved a major breakthrough and answered the
  question positively. Despite the importance of this result (and the
  brilliance of their proof), it is of rather theoretical
  importance. Their proof technique is both technically extremely
  involved and also has at least double exponential parameter
  dependence. Thus, it seems unrealistic that the algorithm could
  actually be implemented.  In this paper, therefore, we study a
  smaller class of planar digraphs, the class of upward planar
  digraphs, a well studied class of planar graphs which can be drawn
  in a plane such that all edges are drawn upwards. We show that on
  the class of upward planar digraphs the problem (i) remains
  NP-complete and (ii) the problem is fixed-parameter tractable. While
  membership in FPT follows immediately from \cite{CMPP}'s general
  result, our algorithm has only single exponential parameter
  dependency compared to the double exponential parameter dependence
  for general planar digraphs. Furthermore, our algorithm can easily
  be implemented, in contrast to the algorithm in \cite{CMPP}.
\end{abstract}

\section{Introduction}
Computing vertex or edge disjoint paths in a graph connecting given
sources to sinks is one of the fundamental 
problems in algorithmic graph theory with applications in VLSI-design,
network reliability, routing and many other areas. There are many
variations of this problem which differ significantly in their
computational complexity. If we are simply given a graph (directed or
undirected) and two sets of vertices $S, T$ of equal cardinality, and
the problem is to compute $|S|$ pairwise vertex or
edge disjoint paths connecting sources in $S$ to targets in $T$, then this problem
can be solved efficiently by standard network flow techniques. 

A variation of this is the well-known \emph{$k$-vertex disjoint paths
  problem}, where the sources and targets are given as lists $(s_1, \dots,
s_k)$ and $(t_1, \dots, t_k)$ and the problem is to find $k$ vertex
disjoint paths connecting each source $s_i$ to its corresponding
target $t_i$. The $k$-disjoint paths problem is NP-complete in general
and remains NP-complete even on planar undirected graphs (see \cite{GareyJ79}).

On undirected graphs, it can be solved in polynomial time for any
fixed number $k$ of source/target pairs. This was first proved for the
$2$-disjoint paths problems, for instance in
\cite{Seymour80,Shiloach80,Thomassen80,Ohtsuki81},  before
Robertson and Seymour proved in \cite{GMXIII} that the problem can be solved in
polynomial-time for every fixed $k$. In fact, they proved more,
namely that the problem is \emph{fixed-parameter tractable} with
parameter $k$, that is, solvable in time $f(k)\cdot |G|^c$, where $f$
is a computable function, $G$ is the input graph, $k$ the number of
source/target pairs and $c$ a fixed constant (not depending on
$k$). See e.g.~\cite{DowneyF98} for an introduction to fixed-parameter tractability.

For directed graphs the situation is quite different
(see~\cite{BangJensenG10} for a survey). Fortune et
al. \cite{FortuneHW80} proved that the problem is already NP-complete
for $k=2$ and hence the problem is not fixed-parameter tractable on
directed graphs. It is not even fixed-parameter tractable on acyclic
digraphs, as shown by Slivkins \cite{Slivkins03}. However, on acyclic
digraphs the problem can be solved in polynomial time for any fixed
$k$ \cite{FortuneHW80}. 

In \cite{JohnsonRobSeyTho01}, Johnson et al. introduced the concept of
\emph{directed tree-width} as a directed analogue of undirected
tree-width for directed graphs. They showed that on classes of
digraphs of bounded directed tree-width the $k$-disjoint
paths problem can be solved in polynomial time for any fixed $k$. 
As the class of acyclic digraphs has directed tree-width $1$,
Slivkins' result \cite{Slivkins03} implies that the
problem is not fixed-parameter tractable on such classes.

Given the computational intractability of the directed disjoint paths
problem on many classes of digraphs, determining classes of digraphs
on which the problem does become at least fixed-parameter tractable is
an interesting and important problem. Using colour coding techniques,
the problem can be shown to become fixed-parameter tractable if the
length of the disjoint paths is bounded. This has, for instance, been
used to show fixed-parameter tractability of the problem on classes of
bounded \emph{DAG-depth} \cite{GanianHKLOR09}. In 1994, Schrijver
\cite{Schrijver94} proved that the directed $k$-disjoint paths problem
can be solved in polynomial time for any fixed $k$ on planar digraphs,
using a group theoretical approach and it was a long standing open
question whether it is fixed-parameter tractable on this restricted
class.  Only recently, Cygan et al.\ achieved a major breakthrough and
answered the question positively. Despite the importance of this
result (and the brilliance of their proof), it is of rather
theoretical importance. Their proof technique is based on irrelevant
vertices in directed grids and both technically extremely involved and
also has at least double exponential parameter dependence. Thus, it
seems unrealistic that the algorithm could actually be implemented.

In this paper, therefore, we study a smaller class of planar digraphs,
the class of \emph{upward planar digraphs}.  These are graphs that
have a plane embedding such that every directed edge points
``upward'', i.e.~each directed edge is represented by a curve that is
monotone increasing in the $y$ direction.  Upward planar digraphs are
very well studied in a variety of settings, in particular in graph
drawing applications (see e.g.~\cite{BattistaETT99}). In contrast to
the problem of finding a planar embedding for a planar graph, which is
solvable in linear time, the problem of finding an upward planar
embedding is NP-complete in general \cite{UpwardEmbedding}. Much work
has gone into finding even more restricted classes inside the upward
planar class that allow to find such embeddings in polynomial time
\cite{SDUpward,TriconnectedUpward,outerplanardagupward}.
  
By definition, upward planar graphs are planar graphs. Hence, by the
above results, the $k$-vertex disjoint paths problem can be solved in
polynomial time on upward planar graphs for any fixed $k$. As a first
main result in this paper we show that the problem remains NP-complete
on upward planar graphs, i.e.,~that this cannot be improved to a
general polynomial-time algorithm. Our construction even shows that
the problem is NP-complete on directed grid graphs.

Our second main result is that the problem is fixed-parameter
tractable with respect to parameter $k$ on the class of upward planar
digraphs if we are given an upward planar graph together with an
upward planar embedding. We present a linear time algorithm
that has single exponential parameter dependency.

\section{Preliminaries}

By $\N$ we denote the set of non-negative integers and for $n\in\N$,
we write $[n]$ for the set $\{1,\ldots, n\}$. We assume familiarity
with the basic concepts from (directed) graph theory, planar graphs
and graph drawings and refer the reader to
\cite{BangJensenG10,BattistaETT99,Diestel05} for more details. For
background on parameterized complexity theory we refer the reader to
\cite{DowneyF98}.

An \emph{upward planar graph} is a graph that has a plane embedding
such that every directed edge points ``upward'', i.e.~each directed
edge is represented by a curve that is monotone increasing in the $y$
direction.

The $k$-vertex disjoint paths problem on upward planar graphs is the
following problem.
\vspace{-3mm}
\npprob{10cm}{Vertex Disjoint Paths on Upward Planar Graphs (\UPDP)}%
{An upward planar graph $G$ together with an upward planar embedding,
  $(s_1, t_1), \dots, (s_k, t_k)$}%
{Decide whether there are $k$ pairwise internally vertex disjoint paths $P_1,
  \dots, P_k$ linking $s_i$ to $t_i$, for all $i$.}

\section{NP-Completeness of \UPDP}
\label{sec:hardness}
This section is dedicated to the proof of one of our main theorems:

\begin{theorem}\label{thm:np}
  \UPDP is NP-complete.
\end{theorem}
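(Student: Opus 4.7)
The plan is to verify NP membership and then prove NP-hardness by reduction from $3$-SAT. Membership in NP is immediate: given a candidate set of $k$ paths, one can check in polynomial time that each $P_i$ goes from $s_i$ to $t_i$ and that no two of them share an internal vertex.

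For hardness, I would reduce from $3$-SAT. Given a formula $\varphi$ with variables $x_1, \ldots, x_n$ and clauses $C_1, \ldots, C_m$, I construct an upward planar digraph $G_\varphi$ together with an explicit upward planar embedding and $k = \Oof(n+m)$ terminal pairs, such that $\varphi$ is satisfiable if and only if all terminals can be linked by pairwise internally vertex disjoint paths. Since the abstract advertises the stronger statement that the problem is already NP-complete on directed grid graphs, I would try to make the gadgets fit inside a grid with every edge oriented so as to be monotone increasing in the $y$-direction, for instance by rotating the grid by $45^\circ$ so that the two edge directions become ``up-left'' and ``up-right''.

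The construction uses three kinds of gadgets. \textbf{Variable gadget:} for each $x_i$, a source-sink pair $(s_{x_i}, t_{x_i})$ is placed so that there are exactly two internally vertex disjoint routes from $s_{x_i}$ to $t_{x_i}$, a \emph{true-rail} and a \emph{false-rail}. Whichever rail the $x_i$-path occupies, the other is released and can be used by clause gadgets. \textbf{Clause gadget:} for each clause $C_j = \ell_{j,1}\vee \ell_{j,2}\vee \ell_{j,3}$, a source-sink pair $(s_{C_j}, t_{C_j})$ is placed so that every route between them passes through one of three designated ``literal wires'', one per literal; the clause path can therefore be completed if and only if at least one of its three literal wires is free. \textbf{Routing:} each literal wire of $C_j$ corresponding to a literal $\ell$ is routed along the rail of the variable gadget for $\mathrm{var}(\ell)$ that is occupied precisely when $\ell$ evaluates to false, so that a satisfying assignment leaves at least one wire of each clause free.

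The main obstacle is making this routing upward planar while keeping the total size polynomial, since literal wires from different variables to different clauses will generally need to cross. I would handle this with a small upward planar crossover gadget, placed on a constant-sized grid piece, whose only valid internally vertex disjoint routings from its two input--output pairs effectively swap the two signals without ever sharing a vertex. Designing such a crossover whose edge orientations are globally consistent with a single $y$-coordinate (so that the global graph is a DAG and all edges point upward) is the delicate combinatorial step; once it is in hand, the overall layout is just an $\Oof((n+m)^2)$-sized grid whose coordinates certify an upward planar embedding.

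Correctness then follows the standard pattern. From a satisfying assignment one picks the rail for each variable and a satisfied literal for each clause and routes accordingly. Conversely, from any valid routing one reads the chosen rail off each variable gadget to recover an assignment and uses the clause-gadget invariant to conclude that every clause is satisfied. Together with the clear polynomial bound on the construction and the explicit embedding, this yields the claimed NP-hardness.
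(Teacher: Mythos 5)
Your high-level architecture (variable rails, clause paths forced through literal wires, polarity encoded by which rail blocks which wire) matches the paper's plan, and NP membership is fine. But the step you yourself flag as the delicate one is where the proposal genuinely fails: you postulate an upward planar crossover gadget ``whose only valid internally vertex disjoint routings from its two input--output pairs effectively swap the two signals without ever sharing a vertex.'' No such gadget exists in any planar graph, upward or not. In a plane embedding, two vertex disjoint paths are drawn as disjoint curves (edges do not cross and the paths share no vertices), so a single path entering a gadget on the left and leaving on the right can never be vertex disjoint from a single path entering the same gadget at the top and leaving at the bottom. A crossover in the sense you describe is therefore impossible, and this is exactly the obstruction the reduction must engineer around, not a gadget-design detail one can defer.

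The paper's way out, which is the key idea missing from your proposal, is to refuse to carry the horizontal ``variable'' signal by one continuous path across each crossing. In its crossing gadget the horizontal path is terminated and re-launched on the other side by fresh source/sink pairs introduced inside the gadget (the pairs $(X_{i,j,t},Y_{i,j,t})$ and $(Z_{i,j,t},W_{i,j,t+1})$ in Definition~\ref{dfn:sat-instance}), while internal routing gadgets with their own terminal pairs force the re-launched segment to remember whether the signal arrived on the high rail $H^{in}$ or the low rail $L^{in}$ (Lemmas~\ref{lem:crossinggadget} and~\ref{lem:row}). The clause path then runs vertically through a column of such gadgets and can traverse a gadget only via the edge $e^+$ or $e^-$ whose presence encodes the literal's polarity (Lemma~\ref{lem:column}), which is what makes the clause satisfiable-iff-routable equivalence go through. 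Once you replace your impossible crossover by this splitting-and-relaying mechanism, your outline essentially becomes the paper's proof; the remaining differences (reducing from $3$-SAT rather than SAT, and the grid-graph refinement, which the paper obtains afterwards by subdividing $C_j$ and $C_j'$ into paths) are immaterial.
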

Before we formally prove the theorem, we give a brief and informal
overview of the proof structure.  The proof of NP-completeness is by a
reduction from SAT, the satisfiability problem for propositional
logic, which is well-known to be NP-complete \cite{GareyJ79}.  On a
high level, our proof method is inspired by the NP-completeness proof
in \cite{Lynch75} but the fact that we are working in a restricted
class of planar digraphs requires a number of changes and additional
gadgets.

Let $\mathcal{V}=\{V_1,\ldots, V_n \}$ be a set of variables and
$\mathcal{C}=\{C_1,…,C_m \}$ be a set of clauses over the variables
from $\mathcal{V}$.  For $1\leq i\leq m$ let
$C_i=\{L_{i,1},L_{i,2},…,L_{i,n_i}\}$ where each $L_{i,t}$ is a
literal, i.e.,~a variable or the negation thereof. We will construct
an upward planar graph $G_\mathcal{C} = (V, E)$ together with a set of
pairs of vertices in $G_\mathcal{C}$ such that $G_\mathcal{C}$
contains a set of pairwise vertex disjoint directed paths connecting
each source to its corresponding target if, and only if, $\mathcal{C}$
is satisfiable.  The graph $G_\mathcal{C}$\footnote{To improve
  readability, we draw all graphs in this paper from left to right,
  instead of upwards.}  is roughly sketched in
Fig.~\ref{fig:overallview}.

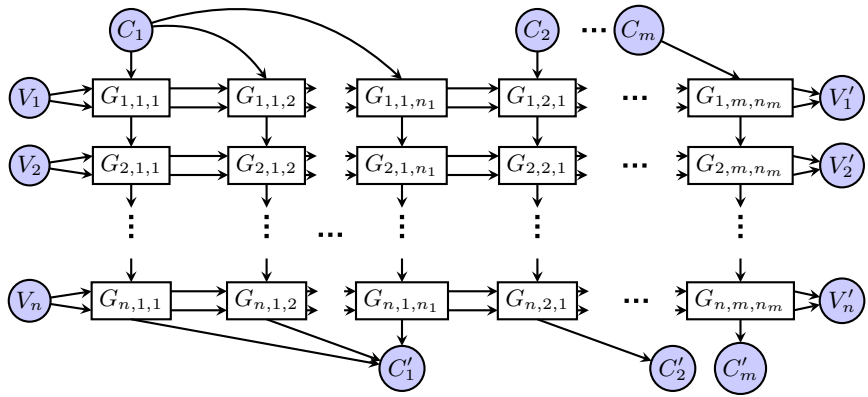
\begin{figure}
  \centering
  \begin{tikzpicture}[scale=0.9, thick,middle
  node/.style={circle,draw,fill=blue!20},inner sep = 0.04cm,outer
  sep=0, minimum size = 0.0cm,special
  node/.style={fill=white,rectangle,draw,thick,inner sep = 0.1cm},
  >=stealth]

\node[middle node] (v1) at (-7.5,1){$V_1$};
\node[middle node] (v2) at (-7.5,0){$V_2$};
\node[middle node] (v3) at (-7.5,-2){$V_n$};

    \node[middle node] (c1) at (-6,2){$C_1$};
    \node[special node] (g111) at (-6,1){$G_{1,1,1}$};
    \node[special node] (g121) at (-6,0){$G_{2,1,1}$};
    \node[special node] (g1n1) at (-6,-2){$G_{n,1,1}$};

    \node[special node] (g112) at (-4,1){$G_{1,1,2}$};
    \node[special node] (g122) at (-4,0){$G_{2,1,2}$};
    \node[special node] (g1n2) at (-4,-2){$G_{n,1,2}$};
    
        \node[special node] (g11n1) at (-2,1){$G_{1,1,n_1}$};
    \node[special node] (g12n1) at (-2,0){$G_{2,1,n_1}$};
    \node[special node] (g1nn1) at (-2,-2){$G_{n,1,n_1}$};
    
    \node[middle node] (c2) at (0,2){$C_2$};
        \node[special node] (g211) at (0,1){$G_{1,2,1}$};
    \node[special node] (g221) at (0,0){$G_{2,2,1}$};
    \node[special node] (g2n1) at (0,-2){$G_{n,2,1}$};
    
    \node[middle node] (c3) at (1.5,2){$C_m$};
        \node[special node] (gm1nm) at (3,1){$G_{1,m,n_m}$};
    \node[special node] (gm2nm) at (3,0){$G_{2,m,n_m}$};
    \node[special node] (gmnnm) at (3,-2){$G_{n,m,n_m}$};

\node[middle node] (v1c) at (4.5,1){$V'_1$};
\node[middle node] (v2c) at (4.5,0){$V'_2$};
\node[middle node] (v3c) at (4.5,-2){$V'_n$};

\node[middle node] (c1c) at (-2,-3){$C'_1$};
\node[middle node] (c2c) at (2,-3){$C'_2$};
\node[middle node] (c3c) at (3,-3){$C'_m$};

\draw[->,thick]([yshift=1pt]v1)--([yshift=4pt]g111.west);
\draw[->,thick]([yshift=-1pt]v1)--([yshift=-4pt]g111.west);

\draw[->,thick]([yshift=4pt]v2)--([yshift=4pt]g121.west);
\draw[->,thick]([yshift=-4pt]v2)--([yshift=-4pt]g121.west);

\draw[->,thick]([yshift=4pt]v3)--([yshift=4pt]g1n1.west);
\draw[->,thick]([yshift=-4pt]v3)--([yshift=-4pt]g1n1.west);

\draw[->,thick]([yshift=4pt]g111.east)--([yshift=4pt]g112.west);
\draw[->,thick]([yshift=-4pt]g111.east)--([yshift=-4pt]g112.west);

\draw[->,thick]([yshift=4pt]g121.east)--([yshift=4pt]g122.west);
\draw[->,thick]([yshift=-4pt]g121.east)--([yshift=-4pt]g122.west);

\draw[->,thick]([yshift=4pt]g1n1.east)--([yshift=4pt]g1n2.west);
\draw[->,thick]([yshift=-4pt]g1n1.east)--([yshift=-4pt]g1n2.west);

\draw[->,thick]([yshift=4pt]g112.east)--([yshift=4pt,xshift=5pt]g112.east);
\draw[->,thick]([yshift=-4pt]g112.east)--([yshift=-4pt,xshift=5pt]g112.east);

\draw[->,thick]([yshift=4pt]g122.east)--([yshift=4pt,xshift=5pt]g122.east);
\draw[->,thick]([yshift=-4pt]g122.east)--([yshift=-4pt,xshift=5pt]g122.east);

\draw[->,thick]([yshift=4pt]g1n2.east)--([yshift=4pt,xshift=5pt]g1n2.east);
\draw[->,thick]([yshift=-4pt]g1n2.east)--([yshift=-4pt,xshift=5pt]g1n2.east);

\draw[->,thick]([yshift=4pt,xshift=-5pt]g11n1.west)--([yshift=4pt]g11n1.west);
\draw[->,thick]([yshift=-4pt,xshift=-5pt]g11n1.west)--([yshift=-4pt]g11n1.west);

\draw[->,thick]([yshift=4pt,xshift=-5pt]g12n1.west)--([yshift=4pt]g12n1.west);
\draw[->,thick]([yshift=-4pt,xshift=-5pt]g12n1.west)--([yshift=-4pt]g12n1.west);

\draw[->,thick]([yshift=4pt,xshift=-5pt]g1nn1.west)--([yshift=4pt]g1nn1.west);
\draw[->,thick]([yshift=-4pt,xshift=-5pt]g1nn1.west)--([yshift=-4pt]g1nn1.west);

\draw[->,thick]([yshift=4pt]g11n1.east)--([yshift=4pt]g211.west);
\draw[->,thick]([yshift=-4pt]g11n1.east)--([yshift=-4pt]g211.west);

\draw[->,thick]([yshift=4pt]g12n1.east)--([yshift=4pt]g221.west);
\draw[->,thick]([yshift=-4pt]g12n1.east)--([yshift=-4pt]g221.west);

\draw[->,thick]([yshift=4pt]g1nn1.east)--([yshift=4pt]g2n1.west);
\draw[->,thick]([yshift=-4pt]g1nn1.east)--([yshift=-4pt]g2n1.west);
\draw[->,thick]([yshift=4pt]g211.east)--([yshift=4pt,xshift=5pt]g211.east);
\draw[->,thick]([yshift=-4pt]g211.east)--([yshift=-4pt,xshift=5pt]g211.east);

\draw[->,thick]([yshift=4pt]g221.east)--([yshift=4pt,xshift=5pt]g221.east);
\draw[->,thick]([yshift=-4pt]g221.east)--([yshift=-4pt,xshift=5pt]g221.east);

\draw[->,thick]([yshift=4pt]g2n1.east)--([yshift=4pt,xshift=5pt]g2n1.east);
\draw[->,thick]([yshift=-4pt]g2n1.east)--([yshift=-4pt,xshift=5pt]g2n1.east);
\draw[->,thick]([yshift=4pt,xshift=-5pt]gm1nm.west)--([yshift=4pt]gm1nm.west);
\draw[->,thick]([yshift=-4pt,xshift=-5pt]gm1nm.west)--([yshift=-4pt]gm1nm.west);

\draw[->,thick]([yshift=4pt,xshift=-5pt]gm2nm.west)--([yshift=4pt]gm2nm.west);
\draw[->,thick]([yshift=-4pt,xshift=-5pt]gm2nm.west)--([yshift=-4pt]gm2nm.west);

\draw[->,thick]([yshift=4pt,xshift=-5pt]gmnnm.west)--([yshift=4pt]gmnnm.west);
\draw[->,thick]([yshift=-4pt,xshift=-5pt]gmnnm.west)--([yshift=-4pt]gmnnm.west);

\draw[->,thick]([yshift=4pt]gm1nm.east)--(v1c.170);
\draw[->,thick]([yshift=-4pt]gm1nm.east)--(v1c.190);

\draw[->,thick]([yshift=4pt]gm2nm.east)--(v2c.170);
\draw[->,thick]([yshift=-4pt]gm2nm.east)--(v2c.190);

\draw[->,thick]([yshift=4pt]gmnnm.east)--(v3c.170);
\draw[->,thick]([yshift=-4pt]gmnnm.east)--(v3c.190);

\draw[->,thick](c1)--(g111.north);
\draw (c1)edge[bend left=30,->,thick](g112.north);
\draw[->,thick](c1)edge[bend left=30,->,thick](g11n1.north);

\draw[->,thick](g1n1.south)--(c1c);
\draw[->,thick](g1n2.south)--(c1c);
\draw[->,thick](g1nn1.south)--(c1c);

\draw[->,thick](c2)--(g211.north);
\draw[->,thick](g2n1.south)--(c2c);

\draw[->,thick](c3)--(gm1nm.north);
\draw[->,thick](gmnnm.south)--(c3c);
\draw[->,thick](g111.south)--(g121.north);
\draw[->,thick](g112.south)--(g122.north);
\draw[->,thick](g11n1.south)--(g12n1.north);
\draw[->,thick](g211.south)--(g221.north);
\draw[->,thick](gm1nm.south)--(gm2nm.north);

\draw[->,thick](g121.south)--([yshift=-10pt]g121.south);
\draw[->,thick](g122.south)--([yshift=-10pt]g122.south);
\draw[->,thick](g12n1.south)--([yshift=-10pt]g12n1.south);
\draw[->,thick](g221.south)--([yshift=-10pt]g221.south);
\draw[->,thick](gm2nm.south)--([yshift=-10pt]gm2nm.south);

\draw[ultra thick,dotted]([yshift=-13pt]g121.south)--([yshift=-23pt]g121.south);
\draw[ultra thick,dotted]([yshift=-13pt]g122.south)--([yshift=-23pt]g122.south);
\draw[ultra thick,dotted]([yshift=-13pt]g12n1.south)--([yshift=-23pt]g12n1.south);
\draw[ultra thick,dotted]([yshift=-13pt]g221.south)--([yshift=-23pt]g221.south);
\draw[ultra thick,dotted]([yshift=-13pt]gm2nm.south)--([yshift=-23pt]gm2nm.south);

\draw[->,thick]([yshift=10pt]g1n1.north)--(g1n1.north);
\draw[->,thick]([yshift=10pt]g1n2.north)--(g1n2.north);
\draw[->,thick]([yshift=10pt]g1nn1.north)--(g1nn1.north);
\draw[->,thick]([yshift=10pt]g2n1.north)--(g2n1.north);
\draw[->,thick]([yshift=10pt]gmnnm.north)--(gmnnm.north);

\draw[ultra thick, dotted]([yshift=20pt,xshift=-35pt]g1nn1.north)--([yshift=20pt,xshift=-25pt]g1nn1.north);

\draw[ultra thick, dotted]([xshift=20pt]g211.east)--([xshift=30pt]g211.east);
\draw[ultra thick, dotted]([xshift=20pt]g221.east)--([xshift=30pt]g221.east);
\draw[ultra thick, dotted]([xshift=20pt]g2n1.east)--([xshift=30pt]g2n1.east);

\draw[ultra thick, dotted]([xshift=10pt]c2.east)--([xshift=20pt]c2.east);

  \end{tikzpicture}
  
  \caption{Structure of the graph $G_\mathcal{C}$}
  \label{fig:overallview}
\end{figure}

We will have the source/target pairs $(V_i, V_i')\in V^2$ for $i\in
[n]$ and $(C_j, C_j')\in V^2$ for $j \in [m]$, as well as some other
source/target pairs inside the gadgets $G_{i,j,t}$ that guarantee
further properties. As the picture suggests, there will be two
possible paths from $V_i$ to $V_i'$, an upper path and a lower path
and our construction will ensure that these paths cannot
interleave. Any interpretation of the variable $V_i$ will thus
correspond to the choice of a unique path from $V_i$ to
$V_i'$. Furthermore, we will ensure that there is a path from $C_j$ to
$C_j'$ if and only if some literal is interpreted such that $C_j$ is
satisfied under this interpretation.

We need some additional gadgets which we describe first to simplify
the presentation of the main proof. All missing proofs can be found in
the appendix. 

\smallskip

\noindent\textbf{Routing Gadget}: The r\^ole of a routing gadget
is to act as a planar routing device. It has two incoming connections,
the edges $e_t$ from the top and $e_l$ from the left, and two outgoing
connections, the edges $e_b$ to the bottom and $e_r$ to the right.
The gadget is constructed in a way that in any solution to the
disjoint paths problem it allows for only two ways of routing a path
through the gadget, either using $e_t$ and $e_b$ or $e_l$ and $e_r$.

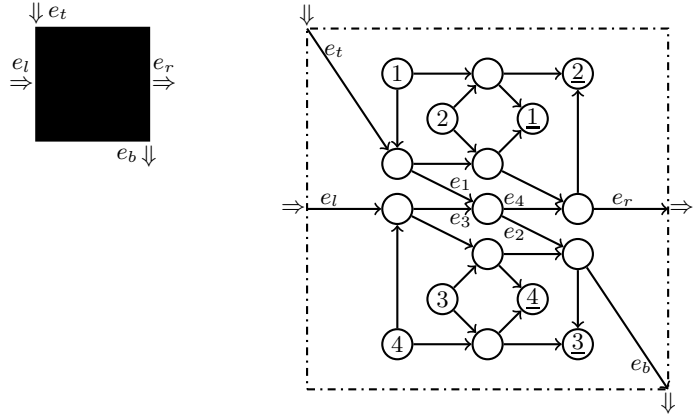
\begin{figure}
  \centering
  \begin{tikzpicture}[=>stealth, thick,mn/.style={circle,draw,fill=white,inner sep = 0.02cm,minimum size=0.4cm}]
\draw[fill=black] (-6,4) rectangle (-4.5,2.5);
\node (inup1) at (-6,4.2){\textbf{$\Downarrow$}};
\node (in1) at ([xshift=2pt]inup1.east){$e_t$};
\node(inrit1) at (-6.2,3.25){\textbf{$\Rightarrow$}};
\node (in2) at ([yshift=2pt]inrit1.north){$e_l$};
\node (outrit1) at (-4.3,3.25){$\Rightarrow$};
\node (out1) at ([yshift=2pt]outrit1.north){$e_r$};
\node (outdown1) at (-4.5,2.3){$\Downarrow$};
\node (out2) at ([xshift=-2pt]outdown1.west){$e_b$};

\begin{scope}[scale=0.6, yshift=2.65cm]
\draw[style={dash pattern=on 3pt off 2pt on \the\pgflinewidth off 2pt}] (-4,4) rectangle (4,-4);

\node (inup) at (-4,4.3){\textbf{$\Downarrow$}};
\node(inrit) at (-4.3,0){\textbf{$\Rightarrow$}};

\node[mn] (m1) at (-2,3){$1$};
\node[mn] (dc1) at (-2,1){};
\node[mn] (dc2) at (-2,0){};
\node[mn] (m2) at (-2,-3){$4$};

\node[mn] (m3) at (-1,2){2};
\node[mn] (m4) at (-1,-2){3};

\node[mn] (dc3) at (0,3){};
\node[mn] (dc4) at (0,1){};
\node[mn] (dc5) at (0,0){};
\node[mn] (dc6) at (0,-1){};
\node[mn] (dc7) at (0,-3){};

\node[mn] (m5) at (1,2){\underline{$1$}};
\node[mn] (m6) at (1,-2){\underline{$4$}};

\node[mn] (m7) at (2,3){\underline{$2$}};
\node[mn] (dc8) at (2,0){};
\node[mn] (dc9) at (2,-1){};
\node[mn] (m8) at (2,-3){\underline{$3$}};

\node (outrit) at (4.3,0){$\Rightarrow$};
\node (outdown) at (4,-4.3){$\Downarrow$};

\node at (-3.4,3.5){$e_{t}$};
\node at (-3.5,0.2){$e_{l}$};
\node at (-0.6,0.55){$e_1$};
\node at (-0.6,-0.25){$e_3$};
\node at (0.6,-0.55){$e_2$};
\node at (0.6,0.2){$e_4$};
\node at (3,0.2){$e_r$};
\node at (3.4,-3.5){$e_b$};

\draw [->] (-4,4)--(dc1);

\draw [->] (-4,0)--(dc2);

\draw [->] (dc1)--(dc5);
\draw [->] (m1)--(dc1);
\draw [->] (dc1)--(dc4);
\draw [->] (dc2)--(dc6);
\draw [->] (dc2)--(dc5);
\draw [->] (m2)--(dc2);
\draw [->] (m2)--(dc7);
\draw [->] (m1)--(dc3);

\draw [->] (m3)--(dc3);
\draw [->] (m3)--(dc4);
\draw [->] (m4)--(dc6);
\draw [->] (m4)--(dc7);

\draw [->] (dc3)--(m5);
\draw [->] (dc3)--(m7);
\draw [->] (dc4)--(m5);
\draw [->] (dc4)--(dc8);
\draw [->] (dc5)--(dc8);
\draw [->] (dc5)--(dc9);
\draw [->] (dc6)--(dc9);
\draw [->] (dc6)--(m6);
\draw [->] (dc7)--(m6);
\draw [->] (dc7)--(m8);

\draw [->] (dc8)--(m7);
\draw [->] (dc8)--(4,0);
\draw [->] (dc9)--(m8);
\draw [->] (dc9)--(4,-4);
\end{scope}

\end{tikzpicture}
  \caption{The routing gadget. In the following, when a routing
    gadget appears as a subgadget in a figure, it will be
    represented by a black box as shown on the left.}
  \label{fig:blackbox}%
\end{figure}

\noindent Formally, the gadget is defined as the graph displayed in
Fig.~\ref{fig:blackbox} with source/tar\-get pairs $(i, \underline{i})$
for $i \in [4]$.  Immediately from the construction of the gadget we
get the following lemma which captures the properties of routing
gadgets needed in the sequel.

\begin{lemma}\label{lem:routing}
  Let $R$ be a routing gadget.
  \begin{enumerate}
  \item There is a solution of the disjoint paths problem in $R$.
  \item Let $P_1, \dots, P_4$ be any solution to the disjoint paths
    problem in $R$, where $P_i$ links vertex $i$ to
    $\underline{i}$. Let $H := R\setminus \bigcup_{i=1}^4 P_i$. Then
    $H$ does not contain a path which goes through $e_t$ to $e_r$ or
    through $e_l$ to $e_b$ but there are paths $P,P'$ in~$H$ such
    that $P$ goes through $e_t$ to $e_b$ and $P'$ goes through $e_l$
    to $e_r$.
  \item There are no two disjoint paths $P, P'$ in $G$ such that $P$
    contains $e_l$ and $e_r$ and $P'$ contains $e_t$ and $e_b$.
  \end{enumerate}
\end{lemma}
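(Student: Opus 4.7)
My plan is to establish all three items by a finite case analysis of $R$. The first step is to enumerate the solutions $(P_1,\dots,P_4)$: by inspecting the out-neighbourhoods of the sources $1,2,3,4$ one sees that each $P_i$ has exactly two realisations (for instance, $P_1$ is either the two-edge path through the top-central unlabelled vertex or the three-edge path through the top-left column), and the vertex-disjointness constraints --- forced by the fact that the middle-column vertices reachable from two distinct sources must be split between the two claiming $P_i$'s --- cut the sixteen combinations down to exactly four valid solutions. Item~1 then follows by exhibiting any one of these four.

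For item~2, the key structural observation is that there is a single internal ``crossing'' vertex of $R$, namely the common successor of the heads of $e_t$ and $e_l$ that is also the common predecessor of the tails of $e_r$ and $e_b$. Tracing successors shows that in $R$ the unique directed path from the head of $e_t$ to the tail of $e_b$ passes through this crossing vertex, and symmetrically for the unique $R$-path from the head of $e_l$ to the tail of $e_r$; by contrast, every $R$-path from the head of $e_t$ to the tail of $e_r$ must use either the upper-right bypass $\mathrm{dc}_1\to\mathrm{dc}_4\to\mathrm{dc}_8$ or the central-right bypass $\mathrm{dc}_1\to\mathrm{dc}_5\to\mathrm{dc}_8$, and analogously for $e_l$ to $e_b$. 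From this the negative part of item~2 follows by a small mechanical check: in each of the four enumerated solutions every such bypass has at least one internal vertex consumed by some $P_i$, so no such route survives in $H$. The positive part of item~2 is handled by pointing to the two specific solutions --- one designed to leave the $e_t$-to-$e_b$ backbone intact in $H$, the other to leave the $e_l$-to-$e_r$ backbone intact --- under which the required path in $H$ survives.

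Item~3 is then an immediate corollary of the uniqueness claim used in the positive part: any path in $R$ containing both $e_l$ and $e_r$ must be the unique $e_l$-to-$e_r$ backbone, and any path containing both $e_t$ and $e_b$ must be the unique $e_t$-to-$e_b$ backbone; since both backbones pass through the crossing vertex, no two such paths can be internally vertex disjoint. The main obstacle I anticipate is not conceptual but the bookkeeping in item~2's negative part, where one must verify that no diagonal or zig-zag route has been overlooked. The cleanest way is to tabulate, for each of the four solutions, the set of internal vertices in $H$ and observe that the induced subgraph of $R$ on those vertices is too sparse to connect $e_t$ to $e_r$ or $e_l$ to $e_b$; in doing so one also uses that the terminals $\underline{1},\dots,\underline{4}$ are sinks of $R$ and so cannot serve as intermediate vertices of any crossing route.
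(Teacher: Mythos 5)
The paper offers no written proof of this lemma (it is asserted to follow ``immediately from the construction''), so the only question is whether your inspection plan is sound. Your enumeration is correct: each $P_i$ has exactly two realisations, and the disjointness constraints leave exactly four solutions, none of which uses the central crossing vertex (the common successor of the heads of $e_t,e_l$ and common predecessor of the tails of $e_r,e_b$). Items 1 and 3 go through exactly as you describe. In the negative half of item 2, however, your stated reason is wrong: since the crossing vertex lies on none of the eight realisations of the $P_i$, the central bypass through it never loses an internal vertex. What actually kills every $e_t$-to-$e_r$ (resp.\ $e_l$-to-$e_b$) route in every solution is that either the head of $e_t$ or the tail of $e_r$ (resp.\ the head of $e_l$ or the tail of $e_b$) is itself a vertex of some $P_i$. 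Your fallback of tabulating the surviving vertices per solution does catch this, so the conclusion stands, but the ``internal vertex of the bypass'' argument as written fails.

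The more serious issue is the positive half of item 2. The lemma quantifies over an \emph{arbitrary} solution $P_1,\dots,P_4$ and asserts that $H$ contains \emph{both} an $e_t$-to-$e_b$ path and an $e_l$-to-$e_r$ path; your plan only exhibits two hand-picked solutions, one for each direction. This is not merely a shortcut: carrying out your own tabulation under the natural vertex-deletion reading of $H:=R\setminus\bigcup_i P_i$ shows that in each of the four solutions at most one of the two straight-through routes survives (the solution that keeps the head of $e_t$ and the tail of $e_b$ free consumes both the head of $e_l$ and the tail of $e_r$, and two of the four solutions destroy both directions). So the universally quantified positive claim cannot be established by any argument; what is true, and what the subsequent crossing-gadget and column lemmas actually rely on, is that no solution ever permits ``turning'' ($e_t$ to $e_r$ or $e_l$ to $e_b$), while for each straight-through direction there is a choice of the internal solution that leaves that direction available. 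You should prove, and explicitly say you are proving, this corrected form rather than silently substituting it for the statement as written.
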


\smallskip

\noindent\textbf{Crossing Gadget}:
A crossing gadget has two incoming connections to its left via the
vertices $H^{in}$ and $L^{in}$ and two outgoing connections to its
right via the vertices $H^{out}$ and $L^{out}$. Furthermore, it has
one incoming connection at the top via the vertex $T$ and outgoing
connection at the bottom via the vertex $B$. Intuitively, we want that
in any solution to the disjoint paths problem, there is exactly one
path $P$ going from left to right and exactly one path $P'$ going from
top to bottom. Furthermore, if $P$ enters the gadget via $H^{in}$ then
it should leave it via $H^{out}$ and if it enters the gadget via
$L^{in}$ then it should leave it via $L^{out}$. Of course, in a planar
graph there cannot be such disjoint paths $P, P'$ as they must cross
at some point. We will have to split one of the paths, say $P$, by
removing the outward source/sink pair and introducing two new
source/sink pairs, one to the left of $P'$ and one to its right.

\begin{figure}
  \label{fig:upwardplanargadget}
  \begin{tikzpicture}[scale=0.9, =>stealth,thick,middle
    node/.style={circle,draw,fill=blue!20},inner sep = 0.06cm, minimum size = 0.0cm,black
    box/.style={fill=black, inner sep = 0.1cm}, special
    node/.style={fill=white,rectangle,draw,very thick,inner sep = 0.1cm}]

    \draw[style={dash pattern=on 3pt off 2pt on \the\pgflinewidth off
      2pt}] (-6,4) rectangle (6.5,-4);

    \node[special node,minimum width=1] (hin) at (-6,2){$H^{in}$};
    \node[special node,minimum width=1] (lin) at (-6,-2){$L^{in}$};
	
    \node[special node] (t) at (-5,4){$T$}; 
    \node[middle node] (m1) at (-5,3){$m_1$}; 
    \node[black box] (b1) at (-5,2){\textcolor{white}{$b_1$}};
	
    \node[middle node] (m2) at (-4,2) {$m_2$};
    \node[special node] (x) at (-4,0){$X$};
    \node[black box] (b2) at (-4,-1){\textcolor{white}{$b_2$}};
    \node[middle node] (m3) at (-4,-2){$m_3$};
    
    \node[middle node] (m4) at (-2,2){$m_4$};
    \node[black box] (b3) at (-2,1){\textcolor{white}{$b_3$}};
    \node[special node] (w) at (-2,0){$W$};
    \node[middle node] (m5) at (-2,-2){$m_5$};
    
    \node[black box] (b4) at (0,2){\textcolor{white}{$b_4$}};
    \node[middle node] (m0) at (0,0) {$m_0$};
    \node[middle node] (m6) at (0,-2){$m_6$};
    
    \node[middle node] (m7) at (2,2){$m_7$};
    \node[special node] (z) at (2,0){$Z$};  
    \node[black box] (b5) at (2,-1){\textcolor{white}{$b_5$}};
    \node[middle node] (m8) at (2,-2){$m_8$};
    
    \node[middle node] (m9) at (4,2){$m_9$};
    \node[black box] (b6) at (4,1){\textcolor{white}{$b_6$}};
    \node[special node] (y) at (4,0){$Y$};
    \node[middle node] (m10) at (4,-2){$m_{10}$};
    
    \node[middle node] (m11) at  (5,-2){$m_{11}$};
    \node[middle node] (m12) at (5,-3){$m_{12}$};
    \node[special node] (b) at (5, -4){$B$};

    \node[special node] (hout) at (6.5,2){$H^{out}$};	
    \node[special node] (lout) at (6.5,-2){$L^{out}$};
    
    \draw[->,very thick](hin)--(b1);
    \draw[->,very thick](lin)--(m3);
    
    \draw[->,very thick](t)--(m1);
    \draw[->,very thick](m1)--(b1);
    \node at (-5.25,2.55){\textbf{$e^-$}};
    \node at (-3,3.3){\textbf{$e^+$}};
    \draw[->,very thick](m1)--(0,3)--(b4);
    \draw[->,very thick](b1)--(m2);
    \draw[->,very thick](b1)--(-5,-1)--(b2);
    
    \draw[->,very thick](m2)--(m4);
    \draw[->,very thick](x)--(m2);
    \draw[->,very thick](x)--(b2);
    \draw[->,very thick](b2)--(m3);
    \draw[->,very thick](b2)--(-3,-1)--(-3,1)--(b3);
    \draw[->,very thick](m3)--(m5);
    
    \draw[->,very thick](m4)--(b3);
    \draw[->,very thick](m4)--(b4);
    \draw[->,very thick](b3)--(w);
    \draw[->,very thick](b3)--(-1,1)--(-1,0)--(m0);
    \draw[->,very thick](m5)--(w);
    \draw[->,very thick](m5)--(m6);
    
    \draw[->,very thick](b4)--(m0);
    \draw[->,very thick](b4)--(m7);
    \draw[->,very thick](m0)--(m6);
    \draw[->,very thick](m0)--(1,0)--(1,-1)--(b5);
    \draw[->,very thick](m6)--(m8);
    \draw[->,very thick](m6)--(0,-3)--(m12);
    
    \draw[->,very thick](m7)--(m9);
    \draw[->,very thick](z)--(m7);
    \draw[->,very thick](z)--(b5);
    \draw[->,very thick](b5)--(3,-1)--(3,1)--(b6);
    \draw[->,very thick](b5)--(m8);
    \draw[->,very thick](m8)--(m10);
    
    \draw[->,very thick](m9)--(hout);
    \draw[->,very thick](m9)--(b6);
    \draw[->,very thick](b6)--(5,1)--(m11);
    \draw[->,very thick](b6)--(y);
    \draw[->,very thick](m10)--(y);
    \draw[->,very thick](m10)--(m11);
    
    \draw[->,very thick](m11)--(m12);
    \draw[->,very thick](m11)--(lout);
    \draw[->,very thick](m12)--(b);
\end{tikzpicture}
  \caption{The crossing gadget}
\end{figure}
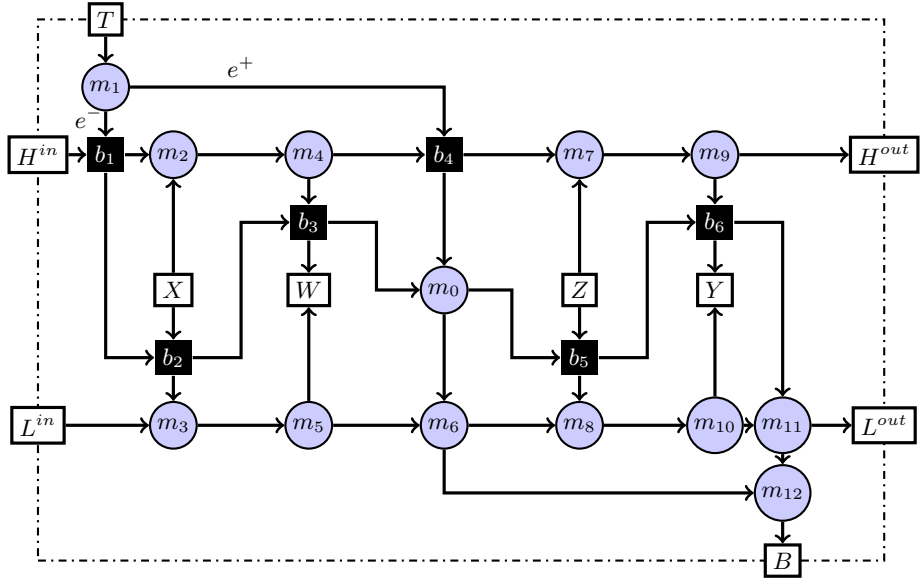

Formally, the gadget is defined as the graph displayed
in Fig.~\ref{fig:upwardplanargadget}. The following lemma follows
easily from Lemma~\ref{lem:routing} (See Appendix~\ref{proof:crossinggadget} for proof).

\begin{lemma}\label{lem:crossinggadget}
  Let $G$ be a crossing gadget.
  \begin{enumerate}
  \item There are uniquely determined vertex disjoint paths $P_1$
    from $H^{in}$ to $W$, $P_2$ from $T$ to $B$ and $P_3$ from $X$
    to $Y$. Let $H:=G\setminus \bigcup_{i=1}^{3} P_i$. Then $H$
    contains a path from $Z$ to $H^{out}$ but it does not contain a
    path from $Z$ to $L^{out}$.
  \item There are uniquely determined vertex disjoint paths $Q_1$
    from $L^{in}$ to $W$, $Q_2$ from $T$ to $B$ and $Q_3$ from $X$
    to $Y$. Let $H:=G\setminus \bigcup_{i=1}^{3} Q_i$. Then $H$
    contains a path from $Z$ to $L^{out}$ but it does not contain a
    path from $Z$ to $H^{out}$.
  \end{enumerate}
\end{lemma}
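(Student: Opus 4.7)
The plan is to reduce everything to Lemma~\ref{lem:routing} applied to each of the six routing gadgets $b_1,\ldots,b_6$ sitting inside the crossing gadget, combined with a case analysis on how the three required paths can be routed. First I would fix, for each $b_i$, an internal solution for its four source/sink pairs as given by Lemma~\ref{lem:routing}(1). Lemma~\ref{lem:routing}(2,3) then implies that any external path of the crossing gadget using internal edges of $b_i$ must pass through it either from $e_t$ to $e_b$ or from $e_l$ to $e_r$, and at most one such external traversal is possible per routing gadget. This turns the whole gadget into a small network whose ``edges'' are these forced modes.

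Once this is set up, the three paths $P_1\colon H^{in}\to W$, $P_2\colon T\to B$ and $P_3\colon X\to Y$ are determined by a chain of forced choices. The endpoints $H^{in},T,X,W,B,Y$ each have essentially one edge available in the crossing gadget, pinning down the first and last step of each path. From there, at every branching vertex the alternative choice would force some routing gadget to carry two external paths, one top-to-bottom and one left-to-right, contradicting Lemma~\ref{lem:routing}(3). For example, $P_2$ cannot enter $b_1$ from the top: otherwise $b_1$ is locked into top-bottom mode and $P_1$, which must enter $b_1$ from the left, can no longer leave to $m_2$. Hence $P_2$ is forced to go $T\to m_1\to (0,3)\to b_4\to m_0\to m_6\to m_{12}\to B$, with $b_4$ in top-bottom mode. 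This in turn prevents $P_1$ from traversing $b_4$ left-to-right, forcing $P_1$ off $m_4$ through $b_3$ in top-bottom mode, and so on. A parallel chain of forced choices on the lower side pins down $P_3$ from $X$ through $b_2$ and the ``$X$-$Y$'' diagonal. The result is that each $b_i$ touched by $P_1,P_2,P_3$ has its mode uniquely determined.

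For the residual graph $H:=G\setminus\bigcup_{i=1}^{3}P_i$, I would exhibit an explicit path from $Z$ to $H^{out}$ built from vertices and edges untouched by $P_1,P_2,P_3$, using in particular $m_7,m_9$ and the residual left-right traversals provided by Lemma~\ref{lem:routing}(2) for the routing gadgets that were either unused or used in the compatible direction. For the non-existence of a $Z\to L^{out}$ path in $H$, I would argue that $P_2$ together with the now-fixed top-bottom modes of $b_3$ and $b_4$ and the incoming constraints at $L^{out}$ form a separator in the residual graph: every candidate $Z\to L^{out}$ route must either share a vertex with one of the $P_i$, or use some $b_j$ in a direction already occupied, both of which are forbidden by the choice of internal solutions together with Lemma~\ref{lem:routing}(3).

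The main obstacle is the global case analysis in the second step: uniqueness is not a purely local property of any single branching, since the ``wrong'' choice at one vertex only breaks a routing gadget several steps later. The argument therefore has to traverse the entire diagonal of routing gadgets systematically, keeping track of which gadget is forced into which mode after each commitment, and showing that every deviation eventually creates a double traversal forbidden by Lemma~\ref{lem:routing}(3). Part~(2) of the statement is obtained by a mirrored version of exactly the same chain of forced choices, starting from $L^{in}$ rather than $H^{in}$ and exchanging the roles of $H^{out}$ and $L^{out}$, so no genuinely new ideas are required.
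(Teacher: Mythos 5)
Your overall strategy -- fixing internal solutions in the routing gadgets $b_1,\dots,b_6$, using Lemma~\ref{lem:routing}(2,3) to restrict every external passage through a gadget to a single ``mode'', and then running a chain of forced choices -- is the right one, and it is essentially what the paper's (very terse) proof does implicitly by just listing the unique path systems. However, your execution of the forcing chain in part (1) contains a concrete error. The path you claim is forced for $P_2$, namely $T\to m_1\to b_4\to m_0\to m_6\to m_{12}\to B$, is not part of any valid solution. Once $P_1$ is forced to be $H^{in}\to b_1\to m_2\to m_4\to b_3\to W$ (which your argument gets right), the path $P_3$ from $X$ has no choice: its only exit from $X$ avoiding $P_1$ is into $b_2$ from the top, so it must leave $b_2$ at the bottom to $m_3$, continue to $m_5$, and from $m_5$ the only vertex not blocked by $P_1$ (whose endpoint is $W$) is $m_6$; hence $P_3$ must be $X\to b_2\to m_3\to m_5\to m_6\to m_8\to m_{10}\to Y$ and in particular occupies $m_6$. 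Consequently $P_2$ cannot use $m_6$, and the unique continuation from $m_0$ is $m_0\to b_5\to b_6\to m_{11}\to m_{12}\to B$, traversing $b_5$ and $b_6$ left-to-right -- this is the path the paper lists. Note that the forcing at $m_0$ is not a routing-gadget mode conflict, which is the only mechanism your sketch invokes, but a plain vertex conflict with the forced $X$--$Y$ path; your case analysis must also track these direct vertex collisions, otherwise it derails exactly as it did here.

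The error propagates into your argument that $H$ contains no $Z$--$L^{out}$ path: the ``separator'' you describe (built from $P_2$ and the top-to-bottom modes of $b_3$ and $b_4$) does not separate $Z$ from $L^{out}$. With the correct $P_2$, the reason is different and simpler: $b_5$ and $b_6$ are traversed left-to-right by $P_2$, so by Lemma~\ref{lem:routing}(2,3) no disjoint top-to-bottom passage through either of them survives in $H$, and $m_{11}$, the unique in-neighbour of $L^{out}$, lies on $P_2$; hence from $Z$ the only escape is $Z\to m_7\to m_9\to H^{out}$. Also be careful with the claim that part (2) is a literal mirror image: in the second case the unique $T$--$B$ path does pass through $m_0\to m_6$ while the $X$--$Y$ path runs along the top through $b_4$, $m_7$, $m_9$, $b_6$, so the two cases have genuinely different path systems and each needs its own (short) forcing argument, as in the paper's explicit lists.
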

The next lemma shows that we can connect crossing gadgets in rows in
a useful way. It follows easily by induction from
Lemma~\ref{lem:crossinggadget}. 

Let $G_1,\ldots, G_s$ be a sequence of crossing gadgets drawn from
left to right in in that order. We address the inner vertices of the
gadgets by their names in the gadget equipped with corresponding
subscripts, e.g., we write $H_1^{in}$ for the vertex $H^{in}$ of
gadget $G_1$. For each $j \in [s-1]$, we add the edges $(H_j^{out},
H_{j+1}^{in})$ and $(L_j^{out}, L_{j+1}^{in})$ and call the
resulting graph a \emph{row of crossing gadgets}. We equip this
graph with the source/target pairs $(X_j, Y_{j}), (Z_j, W_{j+1})$
for $j \in [s-1]$ to obtain an associated vertex disjoint paths
problem $\mathcal{P}_r$ (the subscript $r$ stands for row). Denote
by $\mathcal{P}_r^+$ the problem $\mathcal{P}_r$ with additional
source/target pair $(H_1^{in}, W_1)$ and by $\mathcal{P}_r^-$ the
problem $\mathcal{P}_r$ with additional source/target pair
$(L_1^{in}, W_1)$.

\begin{lemma}\label{lem:row}
  Let $G$ be a row of crossing gadgets. Then both associated vertex
  disjoint paths problems $\mathcal{P}_r^+$, $\mathcal{P}_r^-$ have
  unique solutions. Each path in the solution of $\PPP_r^+$ from
  $Z_i$ to $W_{i+1}$ passes through $H_{i+1}^{in}$ and each path in
  the solution of $\PPP_r^-$ from $Z_i$ to $W_{i+1}$ passes through
  $L_{i+1}^{in}$.
\end{lemma}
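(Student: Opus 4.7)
The plan is induction on the number $s$ of crossing gadgets in the row, invoking Lemma~\ref{lem:crossinggadget} at each step to pin down the routing inside the current gadget and thereby the entry point into the next. I describe the argument for $\mathcal{P}_r^+$; the case $\mathcal{P}_r^-$ is strictly symmetric, with $H$ and $L$ interchanged throughout and clause~(2) of Lemma~\ref{lem:crossinggadget} used in place of clause~(1).

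The base case $s=1$ is precisely clause~(1) of Lemma~\ref{lem:crossinggadget}. For the inductive step, the key structural observation is that the only edges joining distinct gadgets in the row are the pairs $(H_j^{out}, H_{j+1}^{in})$ and $(L_j^{out}, L_{j+1}^{in})$. Hence any source/target pair whose two endpoints both lie inside $G_1$ must be realised by a path confined to $G_1$; in particular this applies to $(H_1^{in}, W_1)$ and $(X_1, Y_1)$, together with the internal $T_1$-to-$B_1$ traffic inside $G_1$. By clause~(1) of Lemma~\ref{lem:crossinggadget} these paths are uniquely determined, and in the residual subgraph of $G_1$ the vertex $Z_1$ can reach $H_1^{out}$ but cannot reach $L_1^{out}$. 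Since $H_1^{out}$ and $L_1^{out}$ are the only exits of $G_1$, the path realising $(Z_1, W_2)$ is forced to leave $G_1$ through $H_1^{out}$ and therefore to enter $G_2$ via $H_2^{in}$, exactly as asserted.

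Cutting the $Z_1$--$W_2$ path at its first visit to $H_2^{in}$ reduces the situation on $G_2,\dots,G_s$ to an instance of $\mathcal{P}_r^+$ with one fewer gadget, in which the pair $(H_2^{in}, W_2)$ takes over the role previously played by $(H_1^{in}, W_1)$. The inductive hypothesis then delivers the unique routing of the remainder of the row with the stated property, and concatenating with the routing already fixed in $G_1$ yields the unique solution of the original instance.

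The step that is really doing the work, and which I expect to be the only subtle point, is the \emph{negative} reachability assertion in Lemma~\ref{lem:crossinggadget}(1): without it, one could imagine a ``mixed'' solution in which some $Z_j$--$W_{j+1}$ path exits on the $L$-side and flips the next gadget into the opposite mode, which would break both uniqueness and the claimed passage through $H_{j+1}^{in}$. It is precisely the impossibility of reaching $L_j^{out}$ from $Z_j$ in the $H$-mode residual that makes the induction deterministic and locks the entire row into a single global mode, $H$ throughout for $\mathcal{P}_r^+$ and $L$ throughout for $\mathcal{P}_r^-$.
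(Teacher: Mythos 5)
Your proposal takes essentially the same route as the paper's own (very terse) argument: an induction along the row in which Lemma~\ref{lem:crossinggadget} fixes the routing inside the current gadget, the negative reachability claim forces the $Z_j$--$W_{j+1}$ path to leave on the same side ($H$ or $L$) on which the row was entered, and the tail $G_{j+1},\dots,G_s$ is then a smaller instance of the same problem; your write-up is in fact more detailed than the paper's. One imprecision worth noting: $\PPP_r^{\pm}$ contains no pair $(T_1,B_1)$, so the residual graph in which you must exclude a $Z_1$-to-$L_1^{out}$ path is larger than the set $H$ of Lemma~\ref{lem:crossinggadget}(1) (the $T$-to-$B$ path is not removed there); the exclusion still holds because the routing boxes $b_5,b_6$ admit only top-to-bottom through-traffic for $Z_1$, but strictly speaking this needs the routing-gadget properties rather than the lemma's statement alone --- a point the paper's own proof glosses over as well.
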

Proof is in Appendix~\ref{proof:row}. The next lemma shows that we can force a relation between rows and
columns of crossing gadgets. 

Let $G_1,\ldots, G_t$ be a sequence of crossing gadgets drawn from top
to bottom in that order. For each $i \in [t-1]$, we add the edge
$(B_{i}, T_{i+1})$ and call the resulting graph a \emph{column of
  crossing gadgets}. We equip this graph with the source/target pairs
$(X_i, Y_i)$ for $i \in [t]$ and with the pair $(T_1, B_t)$ to obtain
an associated vertex disjoint paths problem $\mathcal{P}$.

\begin{lemma}\label{lem:column}
  Let $G$ be a column of crossing gadgets. Let $P_1,\ldots, P_t$ be
  a sequence of vertex disjoint paths such that $P_i$ connects
  either $H^{in}_i$ or $L^{in}_i$ to $W_i$. Let
  $H:=G\setminus\bigcup_{i=1}^{t}P_i$.
  \begin{enumerate}
  \item The vertex disjoint paths problem $\mathcal{P}$ on $H$ has a
    solution.
  \item There is a unique path $Q$ connecting $T_1$ to $B_t$ which
    uses edge $e^+$ in $G_i$ if and only if $P_i$ starts at $H_i^{in}$
    and the edge $e^-$ in $G_i$ if and only if $P_i$ starts at
    $L_i^{in}$.
  \end{enumerate}
\end{lemma}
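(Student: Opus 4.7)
The plan is to argue locally in each gadget and then glue along the connecting edges $(B_i,T_{i+1})$. Throughout, I use Lemma~\ref{lem:crossinggadget} and the fact that the gadgets $G_1,\ldots,G_t$ are internally vertex disjoint, so any path visiting more than one of them must leave the current gadget via its $B$-vertex and enter the next via its $T$-vertex.

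For Part~1, I would fix $i\in[t]$ and apply Lemma~\ref{lem:crossinggadget} inside $G_i$ to the prescribed path $P_i$ from $H_i^{in}$ or $L_i^{in}$ to $W_i$: this yields vertex disjoint paths $Q_i^{TB}$ from $T_i$ to $B_i$ and $Q_i^{XY}$ from $X_i$ to $Y_i$, both lying inside $G_i$ and internally disjoint from $P_i$. Since the gadgets share no internal vertices and communicate only through the edges $(B_i,T_{i+1})$, the paths $Q_i^{XY}$ remain pairwise disjoint in $H$, and the concatenation $Q := Q_1^{TB}\cdot(B_1,T_2)\cdot Q_2^{TB}\cdots(B_{t-1},T_t)\cdot Q_t^{TB}$ is a well-defined $T_1$-to-$B_t$ path disjoint from all $Q_i^{XY}$ and all $P_i$. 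This gives a solution of $\mathcal{P}$ on $H$ and proves the existence part of Part~2. Uniqueness then follows because any $T_1$-$B_t$ path in $H$ must cross every bridge $(B_i,T_{i+1})$ (these are the only inter-gadget edges) and hence decomposes into subpaths inside each $G_i$; by the uniqueness clauses of Lemma~\ref{lem:crossinggadget}, applied inside $G_i$ with the already fixed $P_i$ and $Q_i^{XY}$, each such subpath must coincide with $Q_i^{TB}$.

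It remains to read off which of $e^+$ or $e^-$ is used by $Q$ in each gadget. If $P_i$ starts at $H_i^{in}$, then $b_1$ is the unique out-neighbour of $H_i^{in}$ and is therefore forced onto $P_i$, which blocks $e^-=(m_1,b_1)$; the unique $T_i$-to-$B_i$ path supplied by Lemma~\ref{lem:crossinggadget} is then forced to leave $m_1$ along $e^+=(m_1,b_4)$. If $P_i$ starts at $L_i^{in}$, a symmetric analysis of the gadget shows that the uniquely determined $T_i$-to-$B_i$ path must instead take $e^-$. The main obstacle is precisely this second case: here the choice of $e^-$ over $e^+$ is not forced by a single blocked vertex but by the interaction with the routing gadgets $b_2,\ldots,b_6$ further right in $G_i$, so the verification amounts to unpacking the uniqueness in Lemma~\ref{lem:crossinggadget} (itself a consequence of Lemma~\ref{lem:routing}) and checking that only the $e^-$ branch extends to a full solution on the remainder of $G_i$ once $P_i$ has occupied the $L^{in}$-to-$W$ side.
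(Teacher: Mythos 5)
Your overall strategy --- apply Lemma~\ref{lem:crossinggadget} separately inside each $G_i$, glue the resulting $T_i$-to-$B_i$ pieces along the bridge edges $(B_i,T_{i+1})$, and obtain uniqueness of $Q$ by decomposing any $T_1$-$B_t$ path at these bridges --- is exactly the intended argument (the paper gives no separate proof of this lemma and reads it off from Lemma~\ref{lem:crossinggadget} in just this way). Part~1 and the uniqueness of $Q$ (understood, as it must be, relative to a routing in which the fixed $P_i$, the $X_i$-$Y_i$ paths and the routing-gadget pairs are present) are fine.

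The weak point is the $e^+$/$e^-$ claim, which is the entire content of Part~2. First, your argument in the $H^{in}$ case is imprecise as stated: $b_1$ is not a single vertex but a routing gadget, so the fact that $P_i$ traverses it does not by plain vertex-disjointness ``block'' $e^-$; what forbids the $T$-$B$ path from entering $b_1$ at the top while $P_i$ crosses it left-to-right is Lemma~\ref{lem:routing}(2)/(3). Second, and more seriously, you assert the $L^{in}$ case via ``a symmetric analysis'', then immediately concede it is not symmetric and defer the verification (``checking that only the $e^-$ branch extends to a full solution'') without carrying it out --- so, as written, the proposal does not establish the half of the biconditional that the reduction actually needs. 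The gap is easy to close, and no extendability check is required: the uniqueness in Lemma~\ref{lem:crossinggadget} is witnessed by explicit paths (listed in its proof), namely the unique $T$-$B$ path is $T\rightarrow m_1\rightarrow b_4\rightarrow m_0\rightarrow b_5\rightarrow b_6\rightarrow m_{11}\rightarrow m_{12}\rightarrow B$ when the left-to-right path enters at $H^{in}$, and $T\rightarrow m_1\rightarrow b_1\rightarrow b_2\rightarrow b_3\rightarrow m_0\rightarrow m_6\rightarrow m_{12}\rightarrow B$ when it enters at $L^{in}$; one simply reads off that the former uses $e^+=(m_1,b_4)$ and the latter uses $e^-=(m_1,b_1)$, and then your gluing argument finishes the proof.
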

\noindent Note that such paths $P_i$ as stated in the lemma exist and
they are uniquely determined by Lemma~\ref{lem:crossinggadget}.

We are now ready to construct a vertex disjoint paths instance for
any SAT instance $\mathcal{C}$. 
  
\begin{definition}
  \label{dfn:sat-instance}
  Let $\CCC$ be a SAT instance over the variables
  $\mathcal{V}=\{V_1,\ldots, V_n \}$ and let $\{C_1,…,C_m \}$ be its
  set of clauses. For $j \in [m]$ let $C_j=\{L_{j,1}$, $L_{j,2},…$,
  $L_{j,n_j}\}$, where each $L_{j,s}$ is a literal, i.e.,~a variable
  or the negation thereof.
  
  \begin{enumerate}
  \item The graph $G_\mathcal{C}$ is defined as follows.
    \begin{itemize}
    \item For each variable $V\in\mathcal{V}$ we introduce two
      vertices $V$ and $V'$. 
    \item For each clause $C\in\CCC$ we introduce two vertices $C$
      and $C'$.
    \item For each variable $V_i$ and each literal $L_{j,t}$ in
      clause $j$ we introduce a crossing gadget $G_{i,j,t}$.
    \item For $i\in [n]$ we add the edges $(V_i,H_{i,1,1}^{in})$,
      $(V_i, L_{i,1,1}^{in})$, $(H_{i,m,n_m}^{out}, V_i')$ and
      $(L_{i,m,n_m}^{out}, V_i')$.
    \item For $j \in [m], t\in [n_j]$ we add the edges $(C_j,
      T_{1,j,t})$ and $(B_{n,j,t}, C_j')$
    \item Finally, we delete the edge $e^+$ for all $i\in [n], j \in
      [m], t\in [n_j]$ in $G_{i,j,t}$ if $L_{j,t}$ is a variable the
      edge $e^-$ if it is a negated variable.
    \end{itemize}
    We draw the graph $G_\mathcal{C}$ as shown in Fig.~\ref{fig:overallview}.
    
  \item We define the following vertex disjoint paths problem
    $\mathcal{P}_\mathcal{C}$ on $G_\mathcal{C}$. We add all
    source/target pairs that are defined inside the routing gadgets. Furthermore:
    \begin{itemize}
    \item For $i \in [n], j \in [m], t\in [n_j-1]$, we add the
      pairs
      \begin{itemize}
      \item $(V_i,W_{i,1,1})$,
      \item $(Z_{i,m,n_m}, V_i')$,
      \item $(X_{i,j,t},Y_{i,j,t})$ and
      \item $(Z_{i,j,t}, W_{i,j,t+1})$
      \end{itemize}
    \item For $i\in [n], j \in [m-1]$, we add the pairs
      $(Z_{i,j,n_j},W_{i,j+1,1})$.
    \item For $j \in [m]$, we add the pairs $(C_j, C_j')$.
    \end{itemize}
    
  \end{enumerate}
\end{definition}

\noindent The proof of the following theorem is based on the fact that
in our construction, edge $e^+$ is present in gadget $G_{i,j,t}$, if
and only if $C_j$ does not contain variable $V_i$ negatively and $e^-$
is present in gadget $G_{i,j,t}$, if and only if $C_j$ does not
contain variable $V_i$ positively (especially, both edges are present
if the clause does not contain the variable at all). In particular,
every column contains exactly one gadget where one edge is
missing. Now it is easy to conclude with Lemma~\ref{lem:row} and
Lemma~\ref{lem:column}. We defer the formal proof to the appendix.
 
\begin{theorem}
  \label{thm:sat-vdpp}
  Let $\mathcal{C}$ be a SAT-instance and let
  $\mathcal{P}_\mathcal{C}$ be the corresponding vertex disjoint paths
  instance on $G_\mathcal{C}$ as defined in
  Definition~\ref{dfn:sat-instance}. Then $\mathcal{C}$ is satisfiable
  if and only if $\mathcal{P}_\mathcal{C}$ has a solution.
\end{theorem}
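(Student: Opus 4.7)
The plan is to prove the two directions separately, using Lemmas~\ref{lem:row} and~\ref{lem:column} as black boxes to translate between truth assignments and path routings. The key dictionary is this: in any solution, Lemma~\ref{lem:row} forces the horizontal paths of row~$i$ to enter every crossing gadget $G_{i,j,t}$ uniformly, either always via $H^{in}_{i,j,t}$ or always via $L^{in}_{i,j,t}$, and I will read this choice as the truth value of $V_i$ (top $=$ true). By the deletion rule of Definition~\ref{dfn:sat-instance}, together with the clarification preceding the theorem, edge $e^+$ is missing in $G_{i,j,t}$ exactly when $L_{j,t}=\neg V_i$, and $e^-$ is missing exactly when $L_{j,t}=V_i$. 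Combined with Lemma~\ref{lem:column}, this gives the guiding equivalence: a $C_j$-to-$C_j'$ path can be routed through column~$t$ if and only if the literal $L_{j,t}$ is true under the row-induced assignment.

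\medskip
\noindent
\emph{Sufficiency.} Given a satisfying assignment $\sigma$, I build the solution in three layers. First, for each variable $V_i$, invoke Lemma~\ref{lem:row} on row~$i$ with $\mathcal{P}_r^+$ if $\sigma(V_i)=1$ and with $\mathcal{P}_r^-$ otherwise; this disposes of the pairs $(V_i,W_{i,1,1})$, $(X_{i,j,t},Y_{i,j,t})$, $(Z_{i,j,t},W_{i,j,t+1})$, $(Z_{i,j,n_j},W_{i,j+1,1})$, and $(Z_{i,m,n_m},V_i')$. Second, for each clause $C_j$ fix any index $t_j$ with $L_{j,t_j}$ true under $\sigma$; by the observation above, every gadget $G_{i,j,t_j}$ retains the edge ($e^+$ or $e^-$) prescribed by row~$i$, so Lemma~\ref{lem:column} yields a $C_j$-to-$C_j'$ path through column~$t_j$ that is disjoint from the already chosen row paths. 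Third, the source/target pairs internal to the routing gadgets are handled locally by Lemma~\ref{lem:routing}(1), since the row and column paths only interact with each routing gadget on its boundary edges $e_t,e_l,e_r,e_b$.

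\medskip
\noindent
\emph{Necessity and main obstacle.} Conversely, suppose $\mathcal{P}_\CCC$ has a solution. For each $i$ the $V_i$-to-$V_i'$ path must enter $G_{i,1,1}$ through either $H^{in}_{i,1,1}$ or $L^{in}_{i,1,1}$, since these are the only successors of $V_i$; set $\sigma(V_i)=1$ in the first case and $\sigma(V_i)=0$ in the second, and use Lemma~\ref{lem:row} to conclude that row~$i$ follows the corresponding track uniformly. The main obstacle is the $C_j$-to-$C_j'$ path: it enters some column via $(C_j,T_{1,j,t})$ and leaves via some $(B_{n,j,t'},C_j')$, and I must argue that $t=t'$ and that the path stays straight inside a single column $t_j$. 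This follows from Lemma~\ref{lem:crossinggadget}: once the row paths are pinned down, every lateral outlet of each $G_{i,j,t}$ in the chosen column is already occupied by the uniquely determined $H^{in}$-to-$W$ (or $L^{in}$-to-$W$), $X$-to-$Y$, and $Z$-to-$W$ subpaths, leaving only the vertical $T$-to-$B$ channel free. Lemma~\ref{lem:column} then identifies this vertical path and tells us it uses $e^+$ in row~$i$ iff $\sigma(V_i)=1$ and $e^-$ iff $\sigma(V_i)=0$. Since those edges survived the deletion step, $L_{j,t_j}$ cannot be $\neg V_i$ with $\sigma(V_i)=1$ nor $V_i$ with $\sigma(V_i)=0$ for any $i$; hence $L_{j,t_j}$, and therefore $C_j$, is true under $\sigma$, so $\CCC$ is satisfiable.
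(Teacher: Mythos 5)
Your proposal is correct and follows essentially the same route as the paper's own proof: the forward direction instantiates Lemma~\ref{lem:row} per variable row according to the assignment and then Lemma~\ref{lem:column} to thread each clause path through a column of a true literal, and the backward direction reads the assignment off the $H^{in}_{i,1,1}$/$L^{in}_{i,1,1}$ choice and uses Lemma~\ref{lem:column} to force consistency. If anything, you spell out two points the paper leaves implicit (that the clause path must stay within a single column, and that the routing-gadget internal pairs are satisfiable alongside the row/column paths), which only strengthens the argument.
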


It is easily seen that the presented reduction can be computed in
polynomial time and this finishes the proof of Theorem~\ref{thm:np}.

If we replace the vertices $C_i$ and $C_i'$ with directed paths, then
it is easy to convert the graph $G_\CCC$ to a directed grid graph,
i.e., a subgraph of the infinite grid. This implies that the problem
is NP-complete even on upward planar graphs of maximum degree $4$.

\section{A Linear Time Algorithm for Fixed $k$}

In this section we prove that the $k$-disjoint paths
problem for upward planar digraphs can be solved
in linear time for any fixed value of $k$. In other words, the problem
is fixed-parameter tractable by a linear time parameterized
algorithm. 

\begin{theorem}\label{thm:k-vdpp}
  The problem \UPDP can be solved in time $\Oof(k!\cdot k\cdot n)$,
  where $n := |V(G)|$.
\end{theorem}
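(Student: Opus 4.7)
The plan is to design a branching algorithm that repeatedly routes one remaining source/target pair along its \emph{leftmost} available path. Concretely, at a recursive call with remaining set $I \subseteq [k]$ of pairs, the algorithm tries every $j \in I$; for each choice it computes the leftmost $s_j$-$t_j$ path $L_j$ in the current graph, removes the internal vertices of $L_j$, and recurses on $I \setminus \{j\}$. The recursion tree has at most $k!$ leaves and $\Oof(k \cdot k!)$ total nodes, and each node performs an $\Oof(n)$ leftmost-path computation: the given upward planar embedding induces, at every vertex, a left-to-right order on outgoing edges, and a left-first DFS (preceded by a linear-time reverse-reachability trim to avoid exploring vertices that cannot reach the target) finds the leftmost $s_j$-$t_j$ path in $\Oof(n)$ time, because planarity gives $\Oof(n)$ edges. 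The total running time is therefore $\Oof(k! \cdot k \cdot n)$.

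For correctness I would prove the following \emph{rerouting lemma}: if $G$ together with the pairs $\{(s_i, t_i)\}_{i \in I}$ admits a solution, then there is an index $j^* \in I$ such that the leftmost $s_{j^*}$-$t_{j^*}$ path $L_{j^*}$ in $G$ extends to a full solution, i.e.\ $(G - \mathrm{int}(L_{j^*}), I \setminus \{j^*\})$ is again solvable. Given this lemma, correctness of the algorithm follows by straightforward induction on $|I|$: the branch of the recursion that picks $j = j^*$ at every level is guaranteed to succeed, so if any solution exists, the algorithm finds one.

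To prove the rerouting lemma, I would fix any solution $P_1, \ldots, P_k$ and pick $j^*$ so that $P_{j^*}$ is \emph{leftmost in the solution}—the path whose interior touches the unbounded face on its left, which is well-defined because the $P_i$ are pairwise vertex disjoint and drawn without crossings in the planar embedding. By this choice, no other terminal $s_i$ or $t_i$ with $i \neq j^*$ lies strictly to the left of $P_{j^*}$. The leftmost $s_{j^*}$-$t_{j^*}$ path $L_{j^*}$ in $G$ then lies weakly to the left of $P_{j^*}$; moreover, upward planarity together with the $y$-monotonicity of every directed path prevents $L_{j^*}$ from ever crossing to the right of $P_{j^*}$, since any such detour would have to re-cross $P_{j^*}$, which is impossible for vertex-disjoint monotone curves in an upward planar drawing. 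Hence $L_{j^*}$ is internally disjoint from all $P_i$ with $i \neq j^*$, and replacing $P_{j^*}$ by $L_{j^*}$ yields a solution whose restriction to $I \setminus \{j^*\}$ lives in $G - \mathrm{int}(L_{j^*})$, as required.

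The main obstacle will be making this geometric argument fully rigorous: defining ``leftmost in the solution'' precisely in terms of the embedding, and formalising the claim that the leftmost $s_{j^*}$-$t_{j^*}$ path in $G$ stays within the region bounded on the right by $P_{j^*}$. Upward planarity is essential here, since the combined use of a planar embedding and the common $y$-monotonicity of all directed paths is what rules out the ``spiralling'' configurations that a general planar DAG would permit; this is ultimately why our algorithm escapes the heavy machinery of \cite{CMPP} and achieves only a single-exponential $k!$ dependence in the parameter.
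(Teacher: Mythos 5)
Your algorithm is, up to mirroring left/right and organising the $k!$ orderings as a recursion tree instead of an explicit enumeration of permutations, the same as the paper's: route the pairs one at a time, always taking the extremal (leftmost here, right-most in the paper) path in the remaining graph, and justify this by a rerouting lemma showing that some ordering of the pairs succeeds. The recursion-tree accounting and the linear-time extremal-path computation (reverse-reachability trim plus left-first search guided by the embedding) match the paper's Lemma~\ref{lem:lin-time}, and given the rerouting lemma your induction on $|I|$ is fine.

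The genuine gap is in the proof of the rerouting lemma, exactly where you park it as ``the main obstacle''. First, your selection rule for $j^*$ is not well defined: the union of the drawings of pairwise vertex disjoint paths is a finite union of disjoint simple arcs, whose complement in the plane is a single unbounded face, so \emph{every} solution path ``touches the unbounded face on its left''; and if you instead mean the outer face of the embedding of $G$ itself, possibly no solution path touches it at all. Second, even granted a choice of $P_{j^*}$ with no terminal of another pair strictly to its left, that is too weak: a path $P_i$ whose two endpoints lie below and above the $y$-span of $P_{j^*}$ can still pass strictly to the left of $P_{j^*}$, so $L_{j^*}$ --- which you only know satisfies $L_{j^*}\subseteq P_{j^*}\cup\Left{P_{j^*}}$ --- may run through its vertices. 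What you actually need is a solution path $P_{j^*}$ such that \emph{no point of any other solution path} lies in $\Left{P_{j^*}}$, i.e.\ a minimal element of the ``to the left of'' relation on the solution paths; the existence of such an element is precisely the statement that the transitive closure of this relation is a partial order, i.e.\ that no cyclic configuration among two, three or more pairwise disjoint upward paths exists. That is the substantive geometric content of the paper (Lemma~\ref{lem:antisymmetric} and Lemma~\ref{thm:partialorder}, proved by a region-separation argument resting on the $y$-monotonicity fact of Lemma~\ref{lem:up-path-points}), and your one-sentence appeal to ``no spiralling for monotone disjoint curves'' covers at best the two-path case, not the longer cycles handled by the induction in Lemma~\ref{thm:partialorder}. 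Once minimality is available, the replacement step is the paper's Lemma~\ref{lem:valid-solution}: minimality rules out other paths meeting $\Left{P_{j^*}}$, and paths incomparable to $P_{j^*}$ lie entirely above or below it, so they cannot meet $L_{j^*}$ either; with that your induction, and hence the $\Oof(k!\cdot k\cdot n)$ bound, goes through.
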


For the rest of the section we fix a planar upward graph $G$ together
with an upward planar embedding and $k$ pairs $(s_1, t_1), \dots,
(s_k, t_k)$ of vertices. We will not distinguish notationally between
$G$ and its upward planar embedding.  Whenever we speak about a vertex
$v$ on a path $P$ we mean a vertex $v\in V(G)$ which is contained in
$P$. If we speak about a \emph{point on the path} we mean a point $(x,
y) \in \R^2$ which is contained in the drawing of $P$ with respect to
the upward planar drawing of $G$.  The algorithm is based on the
concept of a path in $G$ being \emph{to the right} of another path
which we define next.

\begin{definition}\label{def:path-right}
  Let $P$ be a path in an upward planar drawing of $G$.  Let $(x, y)$
  and $(x', y')$ be the two endpoints of $P$ such that $y \leq y'$,
  i.e.~$P$ starts at $(x, y)$ and ends at $(x', y')$. We define
  \[
    \begin{array}{rcl}
      \Right{P} & := & \{ (u, v) \in \R^2\mathrel : y \leq v \leq y' \text{ and $u'< u$
        for all $u'$ such that }     (u', v) \in P\}\\
      \Left{P} & := & \{ (u, v)  \in \R^2 \mathrel : y \leq v \leq y' \text{ and $u'
        > u$   for all $u'$ such that }
      (u', v) \in P\}.
    \end{array}
  \]
\end{definition}

The next two lemmas follow immediately from the definition of upward
planar drawings.

\begin{lemma}\label{lem:path-right}
  Let $P$ and $Q$ be vertex disjoint paths in an upward planar drawing of $G$. Then
  either $\Right{P} \cap Q = \emptyset$ or $\Left{P}\cap Q = \emptyset$.
\end{lemma}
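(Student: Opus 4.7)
The plan is to exploit the fact that in an upward planar drawing every directed path is (strictly) monotone in the $y$-coordinate and hence can be parametrized by height. Let $[y_P, y_P']$ be the $y$-range of the endpoints of $P$ and $[y_Q, y_Q']$ that of $Q$. Since each edge of $G$ is strictly monotone in $y$ and concatenating such edges in the direction of increasing $y$ preserves monotonicity, there is a continuous function $f_P\colon [y_P, y_P'] \to \R$ whose graph is exactly the drawing of $P$, and analogously $f_Q$ for $Q$. With this reformulation, $\Right{P} = \{(u,v) : v \in [y_P, y_P'],\ u > f_P(v)\}$ and $\Left{P} = \{(u,v) : v \in [y_P, y_P'],\ u < f_P(v)\}$.

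Next I would restrict attention to the common height range $I := [y_P, y_P'] \cap [y_Q, y_Q']$, which is an interval (possibly empty or a single point). Any point of $Q$ whose height lies outside $I$ automatically lies in neither $\Right{P}$ nor $\Left{P}$, since both sets explicitly require the second coordinate to be in $[y_P, y_P']$. Thus it suffices to analyze the restriction of $Q$ to heights in $I$.

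The key step invokes planarity. Because $P$ and $Q$ are vertex disjoint and the drawing is planar, the drawings of their edges cannot cross, so the drawings of $P$ and $Q$ share no point at all; in particular, $g(v) := f_Q(v) - f_P(v)$ is nowhere zero on $I$. As $g$ is continuous and $I$ is connected, the intermediate value theorem forces $g$ to have constant sign on $I$. If $g > 0$ throughout $I$, every point of $Q$ at a height in $I$ has $u$-coordinate $f_Q(v) > f_P(v)$ and so lies in $\Right{P}$, giving $\Left{P} \cap Q = \emptyset$; the case $g < 0$ is symmetric and yields $\Right{P} \cap Q = \emptyset$.

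I do not anticipate any serious obstacle. The only care needed is to justify that $f_P$ and $f_Q$ are genuine single-valued functions, which relies on the standard convention that curves representing edges in an upward planar embedding are strictly monotone in $y$, together with the observation that monotonicity passes from edges to the whole path; once these are in place, the argument is essentially the intermediate value theorem applied to $g$.
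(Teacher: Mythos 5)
Your argument is correct and is essentially the formalization the paper has in mind: the paper offers no written proof, stating that the lemma ``follows immediately from the definition of upward planar drawings,'' and the intended justification is exactly your observation that both paths are $y$-monotone curves, hence graphs of continuous functions of the height, which by planarity and vertex-disjointness never meet, so the sign of $f_Q-f_P$ is constant on the common height interval. The only point worth making explicit (as you do) is that strict $y$-monotonicity of edges passes to directed paths, so $f_P$ and $f_Q$ are single-valued and continuous.
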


\begin{lemma}\label{lem:up-path-points}
  Let $P$ be a directed path in an upward planar drawing of a
  digraph~$G$.  For $i=1,2,3$ let $p_i := (x_i, y_i)$ be distinct
  points in $P$ such that $y_1 < y_2 < y_3$. Then $p_1, p_2, p_3$
  occur in this order on $P$.
\end{lemma}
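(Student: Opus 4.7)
The plan is to exploit the fact that, in an upward planar drawing, every edge is drawn as a curve that is strictly monotone increasing in the $y$-coordinate, and to lift this to strict monotonicity along the entire directed path $P$. First I would parametrize $P$ as a continuous curve $\gamma \colon [0,1] \to \R^2$ running from its start to its end, obtained by concatenating the parametrizations of the individual edges $e_1, \dots, e_\ell$ making up $P$. On each edge $e_j$ the $y$-coordinate of $\gamma$ is strictly increasing by the definition of an upward planar embedding. At each internal vertex $v$ of $P$, the head of the incoming edge and the tail of the outgoing edge both coincide with $v$, so the parametrization is continuous and the two edges share exactly the $y$-value of $v$ at the transition point. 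Consequently $t \mapsto \gamma_y(t)$ is strictly increasing on $[0,1]$.

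From this monotonicity the lemma follows immediately. Since $\gamma_y$ is strictly increasing, it is in particular injective, and hence so is $\gamma$; for each $i \in \{1,2,3\}$ there is a unique $t_i \in [0,1]$ with $\gamma(t_i) = p_i$, characterized by $\gamma_y(t_i) = y_i$. Strict monotonicity of $\gamma_y$ then gives $t_1 < t_2 < t_3$ precisely because $y_1 < y_2 < y_3$, which is exactly the statement that $p_1, p_2, p_3$ occur in this order on $P$.

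The only point that needs any care is the behaviour at the vertex junctions between consecutive edges: one must verify that strict monotonicity of $\gamma_y$ on each individual edge transfers to strict monotonicity on the whole concatenation. This is immediate, however, because the shared endpoint of two consecutive edges contributes only a single $y$-value to both pieces and the outgoing edge then strictly increases from there, so no plateau or decrease can occur. Thus there is no real obstacle beyond unwinding the definition of an upward planar drawing.
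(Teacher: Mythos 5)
Your proof is correct and is exactly the argument the paper has in mind: the paper offers no separate proof, stating that the lemma ``follows immediately from the definition of upward planar drawings,'' and your parametrization of $P$ with a strictly increasing $y$-coordinate is precisely that unwinding. No issues to report.
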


\begin{definition}
  \label{dfn:right}
  Let $P$ and $Q$ be two vertex disjoint paths in $G$. 
  \begin{enumerate}
  \item A point $p = (x, y) \in \R^2\setminus P$ is to the \emph{right}
    of $P$ if $p\in \Right{P}$. Analogously, we say that $(x, y) \in \R^2\setminus P$ is
    to the \emph{left} of $P$ if $p\in \Left{P}$.
  \item The path $P$ is to the \emph{right} of $Q$, denoted by $Q\prec
    P$ if there exists a
    point $p\in P$ which to the right of some point $q\in Q$. We write
    $\prec^*$ for the transitive closure of $\prec$.
   \item If $\PPP$ is a set of pairwise disjoint paths in $G$, we
     write $\prec_\PPP$ and $\prec^*_\PPP$ for the restriction of
     $\prec$ and $\prec^*$, resp., to the paths in $\PPP$. 
  \end{enumerate}
\end{definition}

We show next that for every set $\PPP$ of pairwise vertex disjoint
paths in $G$ the relation $\prec^*$ is a partial order on $\PPP$.
Towards this aim, we first show that $\prec$ is irreflexive and
anti-symmetric on $\PPP$.


\begin{lemma}\label{lem:antisymmetric}
  Let $\PPP$ be a set of pairwise disjoint paths in $G$.
  \begin{enumerate}
  \item The relation $\prec_\PPP$ is irreflexive.
  \item The relation $\prec_\PPP$ is anti-symmetric, i.e.~if
    $P_1\prec_\PPP P_2$ then $P_2\not\prec_\PPP P_1$ for any $P_1, P_2\in\PPP$.
  \end{enumerate}
\end{lemma}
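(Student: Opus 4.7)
For part (1) I would argue directly from Definition~\ref{def:path-right}: no point of $P$ can itself lie in $\Right{P}$, since for any $(u,v) \in P$ with $y \leq v \leq y'$, $u$ is one of the values $u'$ with $(u',v) \in P$ and clearly fails the strict inequality $u' < u$. Thus $\Right{P} \cap P = \emptyset$, and by Definition~\ref{dfn:right} we cannot have $P \prec_\PPP P$.

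For part (2), suppose for contradiction that $P_1 \prec_\PPP P_2$ and $P_2 \prec_\PPP P_1$. Then there exist points $p_1 \in P_2 \cap \Right{P_1}$ and $p_2 = (u_2, v_2) \in P_1 \cap \Right{P_2}$. Since $\Right{P_1} \cap P_2 \ni p_1$, Lemma~\ref{lem:path-right} applied to the disjoint pair $P_1, P_2$ forces $\Left{P_1} \cap P_2 = \emptyset$. The strategy is therefore to use the second witness $p_2$ to exhibit a point of $P_2$ that lies in $\Left{P_1}$, which will give the required contradiction.

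Because $p_2 \in \Right{P_2}$, the height $v_2$ lies in the $y$-range of $P_2$, so the continuous curve $P_2$ passes through some point $q = (\tilde u_2, v_2)$, and the defining condition of $\Right{P_2}$ applied to $p_2$ then forces $\tilde u_2 < u_2$. Since every edge of an upward planar drawing is strictly $y$-monotone, the path $P_1$ is strictly $y$-monotone as well, so $p_2$ is the \emph{unique} point of $P_1$ at height $v_2$. Consequently the only $u'$ with $(u', v_2) \in P_1$ is $u_2$, and $u_2 > \tilde u_2$ gives $q \in \Left{P_1}$ by Definition~\ref{def:path-right}. Combined with $q \in P_2$ this contradicts $\Left{P_1} \cap P_2 = \emptyset$, completing the proof. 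The one step that deserves care is the uniqueness of a point of $P_1$ at height $v_2$, which rests on strict $y$-monotonicity of the curves used in an upward planar drawing; this is precisely what lets us turn the witness $p_2 \in P_1$ at height $v_2$ into a bound on \emph{every} $P_1$-point at that height.
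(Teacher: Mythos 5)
Your proof is correct, but it takes a genuinely different route from the paper's. You reduce antisymmetry to Lemma~\ref{lem:path-right}: from $P_1\prec_\PPP P_2$ you obtain $\Right{P_1}\cap P_2\neq\emptyset$, conclude $\Left{P_1}\cap P_2=\emptyset$, and then convert the second witness $p_2\in P_1\cap\Right{P_2}$ into a point of $P_2$ lying in $\Left{P_1}$ via the intermediate value theorem and strict $y$-monotonicity of $P_1$, reaching a contradiction. The paper instead argues directly and topologically: it takes the subpath of $P_1$ between its two witness points, attaches vertical rays at its endpoints, notes that the resulting curve separates the plane with the two $P_2$-witnesses on opposite sides, and then contradicts Lemma~\ref{lem:up-path-points}, since $P_2$ would have to cross one of the rays at a point whose $y$-coordinate violates the ordering of points along a directed path. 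What each buys: your argument is shorter and modular, but it relocates rather than eliminates the topological content, since Lemma~\ref{lem:path-right} is only asserted in the paper (``follows immediately from the definition'') and is itself a separation statement of exactly the kind the paper's proof spells out; the paper's proof is self-contained in that respect. Two minor points worth making explicit in your write-up: the paper's Definition~\ref{dfn:right} of $\prec$ only requires a point of $P_2$ to the right of \emph{some} point of $P_1$ at the same height, so your starting assumption $p_1\in\Right{P_1}$ is the stronger path-wise statement --- the two coincide precisely by the uniqueness-at-a-height fact you invoke later, so that identification should be stated up front as well; and both your argument and the paper's implicitly read ``monotone increasing'' as \emph{strict} monotonicity of edges, which is the standard convention for upward drawings and is indeed needed elsewhere in the paper (e.g.\ for Lemma~\ref{lem:path-right} itself).
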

\begin{proof}
  The first claim immediately follows from the definition of
  $\prec$. Towards the second statement, suppose there are $P_1, P_2\in
  \PPP$ such that $P_1\prec_\PPP P_2$ and $P_2\prec_\PPP P_1$. 

  Hence, for $j=1,2$ and $i=1,2$ there are points $p^i_j = (x^i_j,
  y^i_j)$ such that $p^i_j \in P_i$ and $x^1_1 < x^2_1$, $y^1_1 =
  y^2_1$ and $x^1_2 > x^2_2$, $y^1_2 = y^2_2$. W.l.o.g.~we assume that
  $y^1_1 < y^1_2$. Let $Q \subseteq P$ be the subpath of $P$ from
  $p^1_1$ to $p^1_2$, including the endpoints. Let $Q_1 := \{ (x^1_1,
  z) \mathrel : z < y^1_1 \}$ and $Q_2 := \{ (x^1_2, z) \mathrel : z >
  y^1_2 \}$ be the two lines parallel to the $y$-axis going from
  $p^1_1$ towards negative infinity and from $p^1_2$ towards
  infinity. Then $Q_1 \cup Q\cup Q_2$ separates the plane into two
  disjoint regions $R_1$ and $R_2$ each containing a point of
  $P_2$. As $P_1$ and $P_2$ are vertex disjoint but $p^2_1$ and
  $p^2_2$ are connected by $P_2$, $P_2$ must contain a point in $Q_1$
  or $Q_2$ which, on $P_2$ lies between $p^2_1$ and $p^2_2$. But the
  $y$-coordinate of any point in $Q_1$ is strictly smaller than
  $y^2_1$ and $y^2_2$ whereas the $y$-coordinate of any point in $Q_2$
  is strictly bigger than $y^2_1$ and $y^2_2$. This contradicts
  Lemma~\ref{lem:up-path-points}.
\end{proof}

We use the previous lemma to show that $\prec^*_\PPP$ is a partial
order for all sets $\PPP$ of pairwise vertex disjoint paths.

\begin{lemma}\label{thm:partialorder}
  Let $\PPP$ be a set of pairwise vertex disjoint directed paths. Then
  $\prec^*_\PPP$ is a partial order.
\end{lemma}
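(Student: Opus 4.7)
The plan is to deduce the statement from Lemma~\ref{lem:antisymmetric} by showing that $\prec_\PPP$ contains no cycle whatsoever. Since transitivity of $\prec^*_\PPP$ holds by construction, and both irreflexivity and antisymmetry of the transitive closure are equivalent to the absence of cycles in $\prec_\PPP$, the task reduces to excluding cycles of length $n\geq 3$ (the cases $n=1,2$ are handled directly by Lemma~\ref{lem:antisymmetric}). To this end I would prove, by induction on $n$, the following auxiliary statement: for any chain $P_1\prec_\PPP P_2\prec_\PPP\cdots\prec_\PPP P_n$ whose endpoints $P_1$ and $P_n$ have overlapping $y$-ranges, one already has $P_1\prec_\PPP P_n$. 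This immediately rules out any putative cycle $P_1\prec\cdots\prec P_n\prec P_1$: from $P_n\prec P_1$ the $y$-ranges of $P_1$ and $P_n$ overlap, the auxiliary statement then yields $P_1\prec P_n$, and Lemma~\ref{lem:antisymmetric} provides the contradiction.

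For the base cases, $n=2$ is vacuous. When $n=3$ the three $y$-ranges of $P_1,P_2,P_3$ pairwise intersect, so by the one-dimensional Helly theorem for intervals they share a common point $y^*$. At $y^*$ all three paths are simultaneously present, and Lemma~\ref{lem:path-right} guarantees that the left--right orderings witnessing $P_1\prec P_2$ and $P_2\prec P_3$ are consistent at every common $y$-level; hence at $y^*$ the path $P_1$ lies strictly left of $P_2$ and $P_2$ lies strictly left of $P_3$, so $P_1\prec P_3$. For the inductive step with $n\geq 4$, I would locate an intermediate $P_m$ with $1<m<n$ whose $y$-range $r_m$ meets the nonempty intersection $r_1\cap r_n$. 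Such an $m$ makes $r_m$ overlap both $r_1$ and $r_n$, so the induction hypothesis applied to the strictly shorter subchains $P_1\prec\cdots\prec P_m$ and $P_m\prec\cdots\prec P_n$ yields $P_1\prec P_m$ and $P_m\prec P_n$. A final appeal to the $n=3$ case applied to $P_1\prec P_m\prec P_n$, whose endpoints overlap by assumption, completes the step.

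The main obstacle is establishing the existence of such an intermediate $P_m$, which I would prove by contradiction. If every intermediate $y$-range $r_m$ were disjoint from $r_1\cap r_n=[a,b]$, then, since each $r_m$ is an interval, it would lie entirely in $(-\infty,a)$ or entirely in $(b,\infty)$. The chain hypothesis forces consecutive intermediates to overlap, and since $(-\infty,a)$ and $(b,\infty)$ are disjoint, all intermediates must sit on the same side. Assume without loss of generality that $r_m\subseteq(-\infty,a)$ for every $m\in\{2,\ldots,n-1\}$, and recall that $a=\max(\mathrm{low}(r_1),\mathrm{low}(r_n))$. If this maximum is attained at $\mathrm{low}(r_1)$ then $r_2$ lies strictly below $\mathrm{low}(r_1)$, contradicting the overlap required for $P_1\prec P_2$; if it is attained at $\mathrm{low}(r_n)$ then $r_{n-1}$ lies strictly below $\mathrm{low}(r_n)$, contradicting the overlap required for $P_{n-1}\prec P_n$. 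The case in which every intermediate lies in $(b,\infty)$ is entirely symmetric. Hence the desired intermediate $P_m$ must exist and the induction closes, proving that $\prec^*_\PPP$ is a partial order.
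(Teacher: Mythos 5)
Your proof is correct, but it follows a genuinely different route from the paper's. The paper proves irreflexivity of $\prec^*_\PPP$ by taking a shortest cyclic chain $P_0\prec_\PPP\cdots\prec_\PPP P_k\prec_\PPP P_0$ and arguing topologically: it forms the connected regions $R=\bigcup_{i\leq k-2}\Right{P_i}$ and $L=\bigcup_{i\geq 1}\Left{P_i}$, notes that $R\cup L$ separates the plane into an upper and a lower part, and shows that $P_k$, which by minimality avoids both $R$ and $L$, cannot be placed consistently in either part. You instead prove a restricted transitivity statement --- if $P_1\prec\cdots\prec P_n$ and the $y$-projections of $P_1$ and $P_n$ intersect, then already $P_1\prec P_n$ --- by induction on the chain length, using one-dimensional Helly for the $y$-intervals, the consistency of the left/right relation supplied by Lemma~\ref{lem:path-right}, and an interval argument to find an intermediate path whose $y$-range meets the overlap; a cycle then contradicts Lemma~\ref{lem:antisymmetric}. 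Your argument is essentially one-dimensional and combinatorial and avoids the plane-separation step, which in the paper is only sketched; the paper's argument, in exchange, needs no auxiliary induction over subchains. One step you leave implicit: to pass from Lemma~\ref{lem:path-right} to ``$P_2$ lies strictly to the right of $P_1$ at every common $y$-level'' you also need that a path meets each horizontal line in a single point (or connected segment), i.e.\ the $y$-monotonicity underlying Lemma~\ref{lem:up-path-points}, so that a point of $P_2$ at a common level that is not on $P_1$ and not in $\Left{P_1}$ must lie in $\Right{P_1}$; this is easy to supply and in line with the paper's level of detail.
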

\begin{proof}
  By definition, $\prec^*_\PPP$ is transitive. Hence we only need to
  show that it is anti-symmetric for which, by transitivity, it suffices to show that
  $\prec^*_\PPP$ is irreflexive.
  
  To show that $\prec^*_\PPP$ is irreflexive, we prove by induction on
  $k$ that if $P_0, \dots, P_k \in \PPP$ are paths such that $P_0
  \prec_\PPP \dots \prec_\PPP P_k$ then $P_k \not \prec_\PPP
  P_0$.  As for all $P\in \PPP$, $P\not\prec_\PPP P$, this proves the
  lemma.
  
  Towards a contradiction, suppose the claim was false and let $k$ be
  minimum such that there are paths 
   $P_0, \dots, P_{k} \in \PPP$ with $P_0
  \prec_\PPP \dots \prec_\PPP P_{k}$ and $P_k \prec_\PPP P_0$. 
  By Lemma~\ref{lem:antisymmetric}, $k>1$.

  Let $R := \bigcup_{i=0}^{k-2}\Right{P_i}$. Note that $k-2 \geq 0$,
  so $R$ is not empty.  Furthermore, as for all $P, Q$ with $P\prec
  Q$, $\Right{P} \cap \Right{Q} \not=\emptyset$, $R$ is a connected
  region in $\R^2$ without holes.  Let $L := \bigcup_{i=1}^{k-1}
  \Left{P_i}$. Again, as $k>1$, $L\not=\emptyset$ and $L$ is a
  connected region without holes.

  As $P_{k-2} \prec_\PPP P_{k-1}$, we
  have $L \cap R \not=\emptyset$ and therefore $L
  \cup R$ separates the plane into two unbounded regions, the upper
  region $T$ and the lower region $B$.

  The minimality of $k$ implies that $P_i \not\prec_\PPP P_k$ for all
  $i<k-1$ and therefore $R  \cap P_k =  \emptyset$. 
  Analogously, as $P_k \not \prec_\PPP P_i$ for any $i>0$,  we have $L \cap P_k = \emptyset$.
  Hence, either $P_k \subseteq B$ or $P_k\subseteq T$. W.l.o.g.~we
  assume $P_k\subseteq B$. We will show that $\Left{P_0} \cap B  = \emptyset$.


  Suppose there was a point $(x, y) \in P$ and some $x'< x$ such that
  $(x', y) \in B$. This implies that $y < v$ for all $(u,v) \in
  L$. But this implies that $B$ is bounded by $\Right{P_0}$ and $L$
  contradicting the fact that $\Right{P_{k-1}}\cap B \not=\emptyset$.
\end{proof}

We have shown so far that $\prec^*$ is a partial order on every set of
pairwise vertex disjoint paths. 

\begin{remark}
  Note that if two paths $P, Q\in\PPP$ are incomparable with respect to
  $\prec^*_\PPP$ then one path is strictly above the other,
  i.e.~$(\Right{P} \cup \Left{P}) \cap (\Right{Q}\cup \Left{Q}) =
  \emptyset$. This is used in the next lemma.
\end{remark}

\begin{definition}
  Let $s, t\in V(G)$ be vertices in $G$ such that there is a directed
  path from $s$ to $t$. The \emph{right-most}
  $s$-$t$-path in $G$ is an $s$-$t$-path $P$ such that for all
  $s$-$t$-paths $P'$, $P\subseteq P'\cup \Right{P'}$. 
\end{definition}

\begin{lemma}\label{lem:path-comb}
  Let $s,t \in V(G)$ be two vertices and let $P$ be a path from $s$ to
  $t$ in an upward planar drawing of $G$. If $P'$ is an $s$-$t$ path
  such that $P'\cap \Right{P} \not=\emptyset$ then there is an $s$-$t$
  path $Q$ such that $Q\subseteq P\cup\Right{P}$ and $Q\cap \Right{P} \not=\emptyset$.
\end{lemma}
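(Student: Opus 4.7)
The plan is to decompose $P'$ into maximal subpaths that are internally disjoint from $P$, identify the one witnessing $P'\cap\Right{P}\ne\emptyset$, and splice it into $P$ to produce $Q$.

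I would first write $P'=v_0v_1\ldots v_\ell$ with $v_0=s$, $v_\ell=t$, let $I=\{i:v_i\in V(P)\}$ (which contains $0$ and $\ell$), and for each pair of consecutive indices $a<b$ in $I$ call $R_{a,b}:=v_av_{a+1}\ldots v_b$ an \emph{excursion}. Excursions that coincide with a subpath of $P$ (which can happen only if $b=a+1$ and the single edge $(v_a,v_b)$ lies on both paths) are discarded. Planarity of the embedding then forces the drawing of any remaining excursion to meet $P$ exactly in $\{v_a,v_b\}$, since no internal vertex of $R_{a,b}$ lies in $V(P)$ and distinct edges share only endpoints. Because $P'$ is a directed path in an upward planar drawing, the $y$-coordinate is strictly increasing along $R_{a,b}$, so $y(v_a)<y(v_b)$; as the same monotonicity holds along $P$, the vertex $v_a$ must also appear before $v_b$ on $P$.

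The central step is then to show that the interior $R_{a,b}\setminus\{v_a,v_b\}$ of any remaining excursion lies entirely in $\Right{P}$ or entirely in $\Left{P}$. The interior is a connected subset of $\R^2$, disjoint from $P$, whose $y$-coordinates lie in $[y(v_a),y(v_b)]\subseteq[y(s),y(t)]$, so it is contained in $\Right{P}\cup\Left{P}$. The key topological fact needed here is that $\Right{P}$ and $\Left{P}$ are the two connected components of the horizontal strip $\{y(s)\le y\le y(t)\}\setminus P$, separated by the arc $P$ which meets both horizontal boundary lines at $s$ and $t$. A connected subset of their union must therefore sit in exactly one of them.

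Given this, I would pick the excursion $R_0=R_{a_0,b_0}$ whose interior lies in $\Right{P}$ (such an excursion exists because $P\cap\Right{P}=\emptyset$ yet $P'\cap\Right{P}\ne\emptyset$), and define $Q$ as the concatenation of the subpath of $P$ from $s$ to $v_{a_0}$, the excursion $R_0$, and the subpath of $P$ from $v_{b_0}$ to $t$. Since $v_{a_0}$ precedes $v_{b_0}$ on $P$ and $R_0$ meets $P$ only in $\{v_{a_0},v_{b_0}\}$, the three pieces are pairwise disjoint except at the junctions, so $Q$ is a simple directed $s$-$t$ path with $Q\subseteq P\cup\Right{P}$ and $Q\cap\Right{P}\supseteq R_0\setminus\{v_{a_0},v_{b_0}\}\ne\emptyset$. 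The main obstacle is the topological separation statement for the strip; everything else is essentially bookkeeping that follows from strict $y$-monotonicity on directed paths and the fact that distinct edges in a planar embedding meet only at common endpoints.
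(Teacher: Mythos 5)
Your proof is correct and follows essentially the same route as the paper's: planarity forces $P'$ to meet $P$ in shared vertices, upward monotonicity (Lemma~\ref{lem:up-path-points}) guarantees these occur in the same order on both paths, and $Q$ is obtained by splicing subpaths at common vertices. The only (harmless) difference is the splicing rule: you keep $P$ and insert a single excursion of $P'$ lying in $\Right{P}$, whereas the paper takes the rightward subpath on every interval between consecutive common vertices; your version is slightly more economical for this lemma, and your explicit strip-separation argument just fills in what the paper leaves implicit.
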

\begin{proof}
  If $P'\subseteq P\cup\Right{P}$ we can take $Q=P'$. Otherwise,
  i.e.~if $P'\cap\Left{P} \not=\emptyset$, then as the graph is planar
  this means that $P$ and $P'$ share internal vertices. In this case
  we can construct $Q$ from $P\cup P'$ where for subpaths of $P$ and
  $P'$ between two vertices in
  $P\cap P'$ we always take the subpath to the right. 
\end{proof}

\begin{corollary}\label{cor:path-comb}
  Let $s, t\in V(G)$ be vertices in $G$ such that there is a directed
  path from $s$ to $t$. Then there is a unique \emph{right-most}
  $s$-$t$-path in $G$.
\end{corollary}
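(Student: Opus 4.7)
The plan is to start from an arbitrary $s$-$t$-path $P_0$ and iteratively apply Lemma~\ref{lem:path-comb} to push it further to the right until no further rightward push is possible; the terminal path will turn out to be the rightmost one, and uniqueness will follow from the $y$-monotonicity of paths in an upward planar drawing. The latter property is an immediate consequence of Lemma~\ref{lem:up-path-points}: since every edge in the drawing is strictly $y$-monotone, every directed path meets each horizontal line in at most one point, a fact I would use repeatedly.

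For existence I would build a sequence $P_0, P_1, \dots$ as follows: if there is an $s$-$t$-path $P'$ with $P'\cap\Right{P_i}\neq\emptyset$, Lemma~\ref{lem:path-comb} yields $P_{i+1}\subseteq P_i\cup\Right{P_i}$ with $P_{i+1}\cap\Right{P_i}\neq\emptyset$; otherwise I stop and set $P^*:=P_i$. To argue termination I would use the potential $\Phi(P)$ that counts the faces of the given planar embedding whose interior lies in $\Left{P}$. The arc of $P_{i+1}$ that enters $\Right{P_i}$ together with the arc of $P_i$ that it bypasses bounds a nonempty open region strictly between the two paths. A standard connectedness argument (each face is connected, disjoint from the drawing, and the boundary of the region is a subset of the drawing) shows that each face lies entirely inside or entirely outside this region, so openness and nonemptiness of the region force at least one face to lie inside. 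That face contributes to $\Phi(P_{i+1})$ but not to $\Phi(P_i)$; meanwhile $P_{i+1}\subseteq P_i\cup\Right{P_i}$ gives $\Left{P_i}\subseteq\Left{P_{i+1}}$, so every face previously counted by $\Phi(P_i)$ remains counted. Hence $\Phi$ strictly increases and, as the embedding has only finitely many faces, the process must halt after finitely many steps.

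Finally I would check that the terminal $P^*$ satisfies the definition and is unique. If some $s$-$t$-path $P'$ had a point $p\in P^*\setminus(P'\cup\Right{P'})$, then since $P^*$ and $P'$ share endpoints and hence a common $y$-range, by $y$-monotonicity there is a unique $p'\in P'$ at height $y(p)$ with $x(p')>x(p)$; but $p$ is also the unique point of $P^*$ at that height, so $p'\in\Right{P^*}$, contradicting the stopping condition $P'\cap\Right{P^*}=\emptyset$. For uniqueness, any two rightmost paths that differed would disagree at some height, and the more leftward of their two unique points at that height would lie in the $\textit{left}$-region of the other path, violating the containment required by the definition. The step I expect to be the main obstacle is the termination argument: one needs a potential that demonstrably strictly grows under the Lemma~\ref{lem:path-comb} step, and face-counting seems the cleanest choice because the region strictly between $P_i$ and $P_{i+1}$ is a bounded open region whose boundary lies in the drawing, which is exactly the situation in which whole faces are necessarily trapped inside.
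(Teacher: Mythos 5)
Your proof is correct, and it runs along the same basic line the paper intends: the corollary is meant to follow from Lemma~\ref{lem:path-comb}, and you obtain it by iterating that lemma. The difference lies in how the argument is closed off. The paper gives no explicit proof of the corollary; its constructive counterpart is Lemma~\ref{lem:lin-time}, where the right-most path is built greedily by always taking the right-most successor, with acyclicity (automatic in an upward drawing) guaranteeing termination. You instead run an iterative-improvement scheme and certify termination with a face-counting potential. That potential argument is sound --- the region strictly between $P_i$ and $P_{i+1}$ is open, nonempty and has its boundary inside the drawing, so it must contain a whole face, and $\Left{P_i}\subseteq\Left{P_{i+1}}$ keeps previously counted faces counted --- but it is heavier than necessary: since in fact $\Left{P_i}\subsetneq\Left{P_{i+1}}$ at every step, the paths produced are pairwise distinct, and a finite graph has only finitely many $s$-$t$-paths, which already yields termination without any face bookkeeping. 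Your check that the terminal path satisfies the definition and your uniqueness argument are the natural ones; note that both (like the paper's Definition~\ref{def:path-right} and Lemma~\ref{lem:up-path-points}) implicitly use that an upward path meets each horizontal line of its $y$-range in exactly one point, i.e.\ strict $y$-monotonicity of the drawing, and it is worth stating that assumption once since the definition of $\Right{\cdot}$ quantifies over all points of the path at a given height.
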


The corollary states that between any two $s$ and $t$, if there is an
$s$-$t$ path then there is a rightmost one. The proof of
Lemma~\ref{lem:path-comb} also indicates how such a path can be
computed. This is formalised in the next lemma.

\begin{lemma}\label{lem:lin-time}
  There is a linear time algorithm which, given an upward planar
  drawing of a graph $G$ and two vertices $s, t\in V(G)$ computes the
  right-most $s$-$t$-path in $G$, if such a path exists.
\end{lemma}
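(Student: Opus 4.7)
The plan is a greedy ``rightmost-first'' walk from $s$ to $t$. First I compute, in linear time by reverse BFS from $t$, the set $R$ of vertices from which $t$ is reachable in $G$; if $s\notin R$ we report that no $s$-$t$-path exists. Otherwise, starting at $s$, I walk through the DAG: at the current vertex $v\neq t$, advance to the neighbour $w$ such that $(v,w)$ is the rightmost outgoing edge of $v$ in the upward planar rotation among those edges with $w\in R$. Upward planarity matters essentially: since every edge is $y$-monotone the outgoing edges at $v$ form a contiguous arc of the rotation at $v$, so they inherit a well-defined left-to-right order from the embedding. Every $v\in R\setminus\{t\}$ has at least one outgoing edge into $R$ (otherwise $v$ could not reach $t$), and because $G$ is acyclic the resulting walk $Q$ visits each vertex at most once and terminates at $t$.

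For correctness I will show that $Q\subseteq P\cup \Right{P}$ for every $s$-$t$-path $P$ in $G$, which by Corollary~\ref{cor:path-comb} forces $Q$ to be the unique rightmost $s$-$t$-path. Fix an arbitrary $s$-$t$-path $P$. Since $G$ is a DAG, the vertices shared by $P$ and $Q$ appear in the same order along both paths; list them $s=v_0,v_1,\dots,v_\ell=t$. For each $i$, the subpaths $Q[v_i..v_{i+1}]$ and $P[v_i..v_{i+1}]$ share only their endpoints, so by planarity $Q[v_i..v_{i+1}]$ lies entirely on one side of $P[v_i..v_{i+1}]$; the side is decided by the first outgoing edge leaving $v_i$. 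Greedy picks the rightmost outgoing edge of $v_i$ with head in $R$, and $P$'s next edge at $v_i$ is a candidate (its head lies on $P$, hence in $R$), so greedy's first edge is at least as far right in the rotation at $v_i$ as $P$'s. Hence $Q[v_i..v_{i+1}]\subseteq \Right{P}\cup\{v_i,v_{i+1}\}$, and summing over $i$ yields $Q\subseteq P\cup \Right{P}$.

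For the running time, reverse reachability takes $O(|V(G)|+|E(G)|)$; extracting from the rotation system the left-to-right order of outgoing edges at every vertex takes total time $O(\sum_v\deg(v))$; and the greedy walk inspects each outgoing edge at most once. Planarity gives $|E(G)|=O(|V(G)|)$, so the overall runtime is $O(n)$. The step I expect to be most delicate is translating the local rotation notion of ``rightmost'' into the global geometric notion of $\Right{\cdot}$: one must check that immediately above $v_i$ the rotation order on outgoing edges coincides with left-to-right position relative to $P$, and then use $y$-monotonicity (again from upward planarity) to conclude that the ``one side of $P[v_i..v_{i+1}]$'' provided by planarity is genuinely the restriction of $\Right{P}$ to the horizontal slab $y\in[y(v_i),y(v_{i+1})]$.
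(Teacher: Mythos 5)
Your algorithm is essentially the paper's: prune the graph by reachability to $t$, then greedily follow the rightmost admissible outgoing edge from $s$, using acyclicity and the rotation system of the upward planar embedding, exactly as in the paper's proof. The proposal is correct; it even supplies the slab/Jordan-curve correctness argument that the paper only asserts, and skipping the paper's initial forward search from $s$ is immaterial since the walk stays reachable from $s$ anyway.
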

\begin{proof}
  We first use a depth-first search starting at $s$ to compute the set
  of vertices $U\subseteq V(G)$ reachable from $s$. Clearly, if
  $t\not\in U$ then there is no $s$-$t$-path and we can
  stop. Otherwise we use a second, inverse depth-first search to
  compute the set $U'\subseteq U$ of vertices from which $t$ can be
  reached. Finally, we compute the right-most $s$-$t$ path inductively by
  starting at $s \in U'$ and always choosing the right-most successor
  of the current vertex until we reach $t$. The right-most successor
  is determined by the planar embedding of $G$. As $G$ is acyclic,
  this procedure produces the right-most path and can clearly be
  implemented in linear time.
\end{proof}

We show next that in any solution $\PPP$ to the disjoint paths problem in an
upward planar digraph, if $P\in \PPP$ is a maximal element with
respect to $\prec^*_{\PPP}$, we can replace $P$ by the right-most
$s$-$t$ path and still get a valid solution, where $s$ and $t$ are the
endpoints of $P$.

\begin{lemma}\label{lem:valid-solution}
  Let $G$ be an upward planar graph with a fixed upward planar
  embedding and let $(s_1, t_1), \dots, (s_k, t_k)$ be pairs of
  vertices. Let $\PPP$ be a set of pairwise disjoint paths connecting
  $(s_i, t_i)$ for all $i$. Let $P\in \PPP$ be path connecting $s_i$
  and $t_i$, for some $i$, which is maximal with respect
  to $\prec^*_\PPP$. Let $P'$ be the right-most $s_i-t_i$-path in
  $G$. 
  Then $\PPP\setminus \{P\} \cup \{ P'\}$ is also a valid
  solution to the disjoint paths problem on $G$ and  $(s_1, t_1),
  \dots, (s_k, t_k)$. 
\end{lemma}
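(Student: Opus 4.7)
The plan is to show that $\PPP' := (\PPP \setminus \{P\}) \cup \{P'\}$ is a valid solution. Every path in $\PPP\setminus\{P\}$ already links its own source--target pair, and they are pairwise internally vertex-disjoint by hypothesis. Moreover, $P'$ exists and is a path from $s_i$ to $t_i$ by Corollary~\ref{cor:path-comb} (applied to $P$, which is itself an $s_i$--$t_i$-path). Hence the only thing to establish is that $P'$ is internally vertex-disjoint from every $Q \in \PPP \setminus \{P\}$.

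\textbf{Key steps.} I would argue by contradiction: suppose some $Q \in \PPP \setminus \{P\}$ shares an internal vertex $v$ with $P'$. The key observation is that by the definition of right-most path, applied to $P$ as a competing $s_i$--$t_i$-path, we have $P' \subseteq P \cup \Right{P}$ as subsets of $\R^2$. Therefore the drawing-point of $v$ lies either in $P$ or in $\Right{P}$. In the first case, since the drawing of $P$ meets no graph vertex outside $V(P)$, we get $v \in V(P)$; and because $v$ is internal to the simple path $P'$ it cannot equal the shared endpoints $s_i, t_i$, so $v$ is internal to $P$ as well, contradicting the vertex-disjointness of $P$ and $Q$ in $\PPP$. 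In the second case, $v$ lies in $\Right{P}$, so $Q$ contains a point to the right of $P$; by Definition~\ref{dfn:right} this means $P \prec Q$ and hence $P \prec^*_\PPP Q$, contradicting the $\prec^*_\PPP$-maximality of $P$.

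\textbf{Main obstacle.} There is no substantial obstacle: once one notices that $P' \subseteq P \cup \Right{P}$, the picture dictates the two cases, and each is ruled out by exactly one of the two hypotheses on $P$ (being in the solution $\PPP$, and being $\prec^*_\PPP$-maximal). The only subtlety worth checking is the identification of a graph vertex with its drawn point --- but since different vertices have distinct locations and edges meet vertices only at endpoints, a point of the plane lying on the drawing of $P$ coincides with a vertex of $P$, so the geometric and combinatorial formulations agree.
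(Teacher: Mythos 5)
Your proof is correct and follows essentially the same route as the paper: both arguments reduce to the containment $P' \subseteq P \cup \Right{P}$ together with the fact that, by the maximality of $P$ with respect to $\prec^*_\PPP$, no other path of $\PPP$ may meet $\Right{P}$. If anything your write-up is slightly tighter, since you extract $P' \subseteq P \cup \Right{P}$ directly from the definition of the right-most path (the paper only records the weaker $P' \subseteq P \cup \Left{P} \cup \Right{P}$) and convert a hypothetical intersection point in $\Right{P}$ into $P \prec_\PPP Q$ straight from Definition~\ref{dfn:right}, thereby bypassing the paper's case split into $\prec^*_\PPP$-comparable and incomparable pairs and its appeal to the preceding remark.
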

\begin{proof}
  All we have to show is that $P'$ is disjoint from all $Q\in
  \PPP\setminus \{ P\}$. Clearly, as $P$ and $P'$ are both upward
  $s_i$-$t_i$ paths, we have $P'\subseteq P \cup \Left{P} \cup \Right{P}$.

  By the remark above, if $P$ and $Q$ are
  incomparable with respect to $\prec^*_\PPP$, then one is above the
  other and therefore $Q$ and $P'$ must be disjoint. Now suppose $Q$
  and $P$ are comparable and therefore $Q\prec^*_\PPP P$. This implies
  that $(P \cup \Right{P}) \cap Q = \emptyset$ and therefore $Q\cap P'=\emptyset$.
\end{proof}

The previous lemma yields the key to the proof of
Theorem~\ref{thm:k-vdpp} which is provided in
Appendix~\ref{proof:k-vdpp}.

\smallskip

We remark that we can easily extend this result to ``almost upward
planar'' graphs, i.e., to graphs such that the deletion of at most $h$
edges yields an upward planar graph. As finding an upward planar drawing
of an upward planar graph is NP-complete, this might be of use if we
have an approximation algorithm that produces almost upward planar
embeddings.



\section{Conclusion}

In this paper we showed that the $k$-vertex disjoint paths problem is
NP-complete on a restricted and yet very interesting class of planar
digraphs. 
On the other hand, we provided a fast algorithm to approach this hard problem by finding good partial order.
It is an interesting question to
investigate whether the $k!$ factor in the running time of our
algorithm can be improved. Another direction of research is to extend
our result to more general but still restricted graph classes, such as
to digraphs embedded on a torus such that all edges are monotonically
increasing in the $z$-direction or to acyclic planar graphs.

\bibliographystyle{plain}
\bibliography{papers}

\pagebreak
\appendix
\textbf{\Huge{Appendix}}
\smallskip
\section{NP-Completeness Proof}

\label{proof:crossinggadget}
\begin{proof}[Proof of Lemma~\ref{lem:crossinggadget}]
\begin{enumerate}
\item In the first case, the only possible paths are the following. 
	\begin{itemize}
		\item $H^{in}\rightarrow b_1\rightarrow m_2\rightarrow m_4\rightarrow b_3\rightarrow W$
		\item $X\rightarrow b_2\rightarrow m_3\rightarrow m_5\rightarrow m_6\rightarrow m_8\rightarrow m_{10}\rightarrow Y$
		\item $Z\rightarrow m_7\rightarrow m_9\rightarrow H^{out}$
		\item $T\rightarrow m_1\rightarrow b_4\rightarrow m_0\rightarrow b_5\rightarrow b_6\rightarrow m_{11}\rightarrow m_{12}\rightarrow B$
	\end{itemize}
	\item In the second case, the only possible paths are the following. 
	\begin{itemize}
	\item	$L^{in}\rightarrow m_3\rightarrow m_5\rightarrow W$
	\item $X\rightarrow m_2\rightarrow m_4\rightarrow b_4\rightarrow m_7\rightarrow m_9\rightarrow b_6\rightarrow Y$
	\item $Z\rightarrow b_5\rightarrow m_8\rightarrow m_{10}\rightarrow m_{11}\rightarrow L^{out}$
	\item $T\rightarrow m_1\rightarrow b_1\rightarrow b_2\rightarrow b_3\rightarrow m_0\rightarrow m_6\rightarrow m_{12}\rightarrow B$
	\end{itemize}
\end{enumerate}
\end{proof}

\label{proof:row}
\begin{proof}[Proof of Lemma~\ref{lem:row}]
By Lemma~\ref{lem:crossinggadget} we know that if we use $H_i^{in}$($L_i^{in}$) in each gadget then we going out by the $H_i^{out}$($L_i^{out}$), this simply follows that if we have a row of consecutive gadgets then we have exactly one of a following structure between gadgets:
\begin{enumerate}
\item First gadget uses $H_1^{in}$ then : $H_1^{in}\dots H_1^{out}\rightarrow H_2^{in}\dots H_2^{out}\dots H_s^{in}\dots H_s^{out}$
\item First gadget uses $L_1^{in}$ then : $L_1^{in}\dots L_1^{out}\rightarrow L_2^{in}\dots L_2^{out}\dots L_s^{in}\dots L_s^{out}$
\end{enumerate}
\end{proof}

\label{proof:sat-vdpp}
\begin{proof}[Proof of Theorem~\ref{thm:sat-vdpp}]
Let $i\in [n]$. If $\beta(V_i)=1$, we consider the problem
 $\mathcal{P}_r^+$ associated with row $i$ as defined in
 Lemma~\ref{lem:row} and if $\beta(V_i)=0$, we consider $\PPP_r^-$
 associated with row $i$. By the lemma, either problem has a unique
 solution and we can easily identify this solution with a solution to
 the subproblem of $\PPP_\CCC$ restricted to row $i$. At this point,
 we have constructed disjoint paths for all pairs but for $(C_j,
 C_j')$, $j\in [m]$.

 By construction, $e^+$ is present in gadget $G_{i,j,t}$, if and only
 if $C_j$ does not contain variable $V_i$ negatively and $e^-$ is
 present in gadget $G_{i,j,t}$, if and only if $C_j$ does not contain
 variable $V_i$ positively (especially, both edges are present if the
 clause does not contain the variable at all). Note that every column
 contains exactly one gadget where one edge is missing.

 As $\beta$ is satisfying, for each $j\in [m]$, there is $t\in [n_j]$ 
 such that $\beta(L_{j,t})=1$. Assume without loss of
 generality that $L_{j,t}=V_i$ for some $i\in [n]$. Consider the
 column which corresponds to $L_{j,t}$. With all the above paths
 fixed, we are in the situation of Lemma~\ref{lem:column}. By
 construction, both edges $e^+$ and $e^-$ are present in any gadgets
 $G_{t,j,t}$ where $V_t\neq L_{j,t}$. But also $e^+$ is present in
 $G_{i,j,t}$, as $V_i$ occurs positively in $C_j$. Hence by
 Lemma~\ref{lem:column} we find a path disjoint to all of the above
 paths from $T_{1,j,t}$ to $B_{n,j,t}$ and this path can be extended
 to a path from $C_j$ to $C_j'$.

 \medskip
 \noindent Now assume that there is a solution to the disjoint paths
 problem in $G_\mathcal{C}$. Obviously, each path $P_i$ connecting
 $V_i$ to $V_i'$ uses exactly one of $H_{i,1,1}^{in}$ or
 $L_{i,1,1}^{in}$. We define $\beta:\{V_1,\ldots,
 V_n\}\rightarrow\{0,1\}$ by $\beta(V_i)=1$ if and only if $P_i$ uses
 $H_{i,1,1}^{in}$. By Lemma~\ref{lem:column}, each path $Q_k$
 connecting $C_t$ to $C_t'$ can only pass through $G_{i,j,t}$ if it
 is consistent with this assignment. Hence, $C_k$ contains a variable
 $V_i$ with $\beta(V_i)=1$.
\end{proof}
\smallskip

\section{Main Algorithm and Correctness Proof}
\begin{proof}[Proof of Theorem~\ref{thm:k-vdpp}]
\label{proof:k-vdpp}
  Let $G$ with an upward planar drawing of $G$ and $k$ pairs $(s_1,
  t_1), \dots, (s_k, t_k)$ be given. To decide whether there is a
  solution to the disjoint paths problem on this instance we proceed
  as follows. In the first step we compute for each $s_i$ the set of
  vertices reachable from $s_i$. If for some $i$ this does not include
  $t_i$ we reject the input as obviously there cannot be any
  solution. 

  In the second step, for every possible permutation $\pi$ of $\{ 1,
  \dots, k\}$ we proceed as follows.
  Let $i_1 := \pi(k), \dots, i_k := \pi(1)$ be the numbers $1$ to $k$
  ordered as indicated by $\pi$ and let $u_j := s_{i_j}$ and $v_j :=
  t_{i_j}$, for all $j\in [k]$. We can view $\pi$ as a linear order
  on $1, \dots, k$ and for every such $\pi$ we will search for a
  solution $\PPP$ of the disjoint paths problem for which
  $\prec^*_\PPP$ is consistent with $\pi$. 

  For a given $\pi$ as above we inductively construct a 
  sequence $\PPP_0, \dots, \PPP_k$ of sets of pairwise vertex disjoint
  paths such that for all $i$, $\PPP_i$ contains a set of $i$ paths
  $P_1, \dots, P_i$ such that for all $j\in [i]$ $P_j$ links $u_{j}$ to $v_{j}$. We
  set $\PPP_0 := \emptyset$ which obviously satisfies the
  condition. Suppose for some $0\leq i < k$, $\PPP_i$ has 
  already been constructed. To obtain $\PPP_{i+1}$ we compute the
  right-most path linking $u_{i+1}$ to $v_{i+1}$ in the graph
  $G\setminus \bigcup \PPP_i$. By Lemma~\ref{lem:lin-time}, this can
  be done in linear time for each such pair $(s_{i+1}, t_{i+1})$. If
  there is such a path $P$ we define $\PPP_{i+1} := \PPP_i \cup \{ P
  \}$. Otherwise we reject the input. 
  Once we reach $\PPP_k$ we stop and output $\PPP_k$ as solution. 

  Clearly, for every permutation $\pi$ the algorithm can be
  implemented to run in time $\Oof(k\cdot n)$, using
  Lemma~\ref{lem:lin-time}, so that the total running time is
  $\Oof(k!\cdot k\cdot n)$ as required. 

  Obviously, if the algorithm outputs a set $\PPP$ of disjoint paths
  then $\PPP$ is indeed a solution to the problem. What is left to
  show is that whenever there is a solution to the disjoint
  path problem, then the algorithm will find one. 
  
  So let $\PPP$ be a solution, i.e.~a set of $k$ paths $P_1, \dots,
  P_k$ so that $P_i$ links $s_i$ to $t_i$. Let $\leq$ be a linear
  order on $\{ 1, \dots, k\}$ that extends $\prec^*_\PPP$ and let
  $\pi$ be the corresponding permutation such that $(u_1, v_1), \dots,
  (u_k, v_k)$ is the ordering of $(s_1, t_1), \dots, (s_k, t_k)$
  according to $\leq$. We claim that for this permutation $\pi$ the
  algorithm will find a solution. 
  Let $P$ be the right-most $u_k$-$v_k$-path in $G$ as computed by the
  algorithm. By Lemma~\ref{lem:valid-solution}, $\PPP \setminus \{ P_k
  \} \cup P$ is also a valid solution so we can assume that $P_k =
  P$. Hence, $P_1, \dots, P_{k-1}$ form a solution of the disjoint
  paths problem for $(u_1, v_1), \dots,  (u_{k-1}, v_{k-1})$ in
  $G\setminus P$. By repeating this argument we get a solution
  $\PPP':= \{ P'_1, \dots, P'_k\}$ such that each $P'_i$ links $u_i$ to
  $v_i$ and is the right-most $u_i$-$v_i$-path in $G\setminus
  \bigcup_{j>i} P'_j$. But this is exactly the solution found by the
  algorithm. This prove the correctness of the algorithm and concludes
  the proof of the theorem.
\end{proof}








\end{document}